\newtheorem{theorem}{Theorem}[section]
\newtheorem{lemma}[theorem]{Lemma}
\newcommand\eps{\varepsilon}
\newcommand{\E}{\mathbb E}
\newcommand{\Cov}{\mathbb C\textrm{ov}}
\newcommand{\Var}{\mathbb V\textrm{ar}}
\newcommand{\Prob}{\mathbb{P}}
\newcommand{\Bin}{\mathrm{Bin}}
\newcommand{\Geom}{\mathrm{Geom}}
\newcommand{\Nn}{{\mathbb N}}
\title{On Broadcasting Time in the Model of Travelling Agents}
\author[R.~Huq \and B.~Kami\'nski \and A.~Mashatan \and P.~Pra\l{}at \and P.~Szufel]
{Reaz Huq \and Bogumi\l{} Kami\'nski \and Atefeh Mashatan \and Pawe\l{} Pra\l{}at \and Przemys\l{}aw Szufel}
\address{SGH Warsaw School of Economics, Warsaw, Poland}
\email{bogumil.kaminski@sgh.waw.pl, pszufe@sgh.waw.pl}
\address{Ryerson University, Toronto, Canada}
\email{reaz.huq@ryerson.ca, amashatan@ryerson.ca, pralat@ryerson.ca}
\date{\today}
\begin{document}

\maketitle

\begin{abstract}
Consider the following broadcasting process run on a connected graph $G=(V,E)$. Suppose that $k \ge 2$ agents start on vertices selected from $V$ uniformly and independently at random. One of the agents has a message that she wants to communicate to the other agents. All agents perform independent random walks on $G$, with the message being passed when an agent that knows the message meets an agent that does not know the message. The broadcasting time $\xi(G,k)$ is the time it takes to spread the message to all agents.

Our ultimate goal is to gain a better understanding of the broadcasting process run on real-world networks of roads of large cities that might shed some light on the behaviour of future autonomous and connected vehicles. Due to the complexity of road networks, such phenomena have to be studied using simulation in practical applications. In this paper, we study the process on the simplest scenario, i.e., the family of complete graphs, as in this case the problem is analytically tractable. We provide tight bounds for $\xi(K_n,k)$ that hold asymptotically almost surely for the whole range of the parameter $k$. These theoretical results reveal interesting relationships and, at the same time, are also helpful to understand and explain the behaviour we observe in more realistic networks.
\end{abstract}

\section{Introduction and Motivation}

In this paper, we investigate the problem of broadcasting messages between agents that randomly move on a connected graph $G=(V,E)$. The assumption is that $k \ge 2$ agents start the process at random locations on the graph and then perform a random walk along its vertices. One agent, selected in advance, initially possesses some information. If two agents meet at some point during the process and only one of them possesses the information, it is passed along to the other agent. The broadcasting time $\xi(G,k)$ is the time it takes to spread the message to all agents. (Formal definition will be provided in Section~\ref{sec:problem}.)

The motivation for this line of work stems from the need to better understand the process of broadcasting messages between cars that are connected via Dedicated Short-Range Communication (DSRC) devices. DSRC devices are a component of Vehicle-to-Vehicle (V2V) communication infrastructure \cite{Kenney, Bai}. V2V solutions can be combined with cellular networks (Vehicle-to-Infrastructure, V2I) to form an efficient communication solution for a transportation system \cite{Dey}. For scenarios involving direct communication within a city, dedicated protocols, such as the Urban Vehicular BroadCAST (UV-CAST) protocol \cite{Viriyasitavat} have been developed. In UV-CAST the authors use agent-based simulation of a real city to test the protocol's efficiency.

\medskip

In this paper, we focus solely on understanding how the density of vehicles in a network affects the efficiency of broadcasting messages in the theoretical case of a complete graph, as this is a scenario that can be investigated analytically.
We provide a complete characterization of $\xi(K_n,k)$ for the whole range of the parameter $k$. Interestingly, $\xi(K_n,k)$ is well concentrated around $2n \ln k / k$ for a wide range of possible values of $k$, but the behaviour changes when $k$ is very large, namely, when $k$ is linear in $n$ (see Theorem~\ref{thm:main} for details).

For more realistic scenarios involving real-world networks of roads, in~\cite{Lukasz} we have developed a general simulation framework. An in-depth simulation-approach to the broadcasting problem will be the subject of an accompanying paper~\cite{simulations}.
Selected results of the experiments performed in~\cite{simulations} are given in Appendix~\ref{sec:appendix}. Despite the fact that real-world road networks have completely different structure than complete graphs, simulations led us to the hypothesis that the broadcasting time in real scenario is still proportional to $n \ln k / k$, where $n$ is the number of vertices in the graph (which correspond to intersections) and $k$ is the number of traveling agents (which correspond to cars).

\medskip

The paper is structured as follows. In the next section, Section~\ref{sec:related_problems}, we discuss related problems. We formally define our problem, introduce asymptotic notation, and state the main result in Section~\ref{sec:problem}. The whole of Section~\ref{sec:proofs} is devoted to proving the main result. In Appendix~\ref{sec:appendix}, we provide some initial results on real-world road networks that were our original motivation for the study presented in this paper. All source code for numerical experiments and results of the simulations performed for this paper are available on GitHub\footnote{\url{https://github.com/pszufe/SignalBroadcastingSim.jl}}.

\section{Related Problems}\label{sec:related_problems}

The performance of a random walk in a networks is a fundamental process that has found applications in many areas of computer science.
Since this paper contains theoretical results, let us concentrate on rigorous, theoretical results of related processes.
As this is still a very broad topic, we only scratch the surface and focus on multiple random walks performed simultaneously (which has many applications in distributed computing, such as sampling) and processes run on complete graphs. For more on other directions, we direct the reader to one of the many books on Markov chains; see, for example~\cite{book_Yuval}.

Suppose there are $k \ge 2$ particles, each making a simple random walk on a graph $G$. Even if the particles are oblivious of each other, it is important and non-trivial to estimate the (vertex) cover time, an extensively studied graph parameter that is defined as the expected time required for the process to visit every vertex of $G$. Questions become more interesting (and difficult) once we allow particles to interact once they meet. We assume that interaction occurs only when meeting at a vertex, and that the random walks made by the particles are otherwise independent. There are at least four interesting variants of this process:
\begin{itemize}
    \item \emph{Predator-Prey}: estimate the expected time-to-extinction of the prey particles under the assumption that $k$ predator and $\ell$ prey particles walk independently; predators eat prey particles upon meeting at a vertex.
    \item \emph{Coalescing particles}: estimate the expected time to coalesce to a single particle under the assumption that $k$ particles walk independently and coalesce upon meeting at a vertex.
    \item \emph{Annihilating particles}: estimate the expected time-to-extinction of all particles under the assumption that $k = 2\ell$ particles walk independently and destroy each other (pairwise) upon meeting at a vertex.
    \item \emph{Talkative particles}: estimate the expected time to broadcast a message---this is exactly the problem that we are concerned with in this paper.
\end{itemize}
All of these variants have been studied for random $d$-regular graphs $\mathcal{G}_{n,d}$~\cite{Alan_d-reg}. In particular, if $d \ge 3$ is a fixed constant and $k \le n^{\eps}$ for a sufficiently small constant $\eps>0$, then asymptotically almost surely (see the next section for a definition and a notation used)
$$
\xi(\mathcal{G}_{n,d}, k) \sim \frac {2 H_{k-1}}{k} \cdot \frac {d-1}{d-2} \cdot n.
$$
This is the only theoretical result on the broadcasting time we are aware of.

On the other hand, the variant of coalescing particles is very well-studied, mainly because of its surprising connection to the voter model~\cite{voter2,voter6}. The state of the process at a given time $t$ is described by a function $\mu_t : V(G) \to O$, where $V(G)$ is the vertex set of a graph $G$ and $O$ is a given set of possible opinions. Each vertex $v\in V(G)$ ``wakes up'' at rate 1 and when it happens at a time $t>0$, $v$ chooses one of its neighbours $w$ uniformly at random and updates its value $\mu_t(v)$ to the opinion of $w$; all other opinions remain the same. A classical duality result (see, for example,~\cite{voter2,voter6}) directly relates the state of the process at a given time to a system of coalescing random walks on $G$ moving backwards in time. As we already mentioned, there are many interesting results on the coalescing time. Let us only mention a beautiful conjecture posed by Aldous and Fill in the mid-nineties (Open problem 13, Chapter~14 of~\cite{voter2}). They conjectured an upper bound for the mean coalescent time in terms of the mean hitting time of a single random walk. The conjecture was proved in~\cite{Roberto12,Roberto13}.

\medskip

Let us now briefly discuss the following well-known and well-studied rumour spreading protocols: \emph{Push} and \emph{Push\&Pull}. Suppose that one vertex in a network is aware of a piece of information, the ``rumour'', and wants to spread it to all vertices. In each round of the \emph{Push} protocol, every informed vertex contacts a random neighbour and sends the rumour to it (``pushes'' the rumour). In \emph{Push\&Pull}, uninformed vertices can also contact a random neighbour to get the rumour if the neighbour knows it (``pulls'' the rumour).

There is a long sequence of interesting and important papers studying the runtime of \emph{Push} on the complete graph. The first paper considering this protocol is~\cite{Push10} but more precise bounds were provided in~\cite{Push17} and then in~\cite{Push5}, in which it was shown that the process can essentially be stochastically bounded (from both sides) by coupon collector type problems. A very recent paper~\cite{Latin} answers some remaining questions; it both determines the limiting distribution and explains why it is a difficult problem: the runtime, scaled appropriately by $(\log_2 n + \ln n)$, has no limiting distribution; instead, it exhibits a double-oscillatory behaviour. \emph{Push} has been extensively studied on several other graph classes besides complete graphs.

The \emph{Push\&Pull} protocol has an equally long sequence of interesting papers studying it. The synchronous version of the protocol (as described above) was introduced in~\cite{Pull5} and popularized in~\cite{Pull23}. However, such synchronized models (that is, models in which all vertices take action simultaneously at discrete time steps) are not plausible for many applications, including real-world social networks. As a result, an asynchronous version of the model with a continuous timeline was introduced in~\cite{Pull4}. In this variant, each vertex has its own independent clock that rings at the times of a rate 1 Poisson process with the protocol specifying what a vertex has to do when its own clock rings. The first theoretical relationships between the spread times in the two variants was provided in~\cite{Wormald}.

\section{Formulation of the Problem and the Main Result}\label{sec:problem}

In this section, we formally define the process we aim to analyze (Subsection~\ref{sec:proglem}). We define it for any connected graph but in this paper we focus on complete graphs. As our results are asymptotic in nature, we need to introduce the asymptotic notation that is used throughout the entire paper (Subsection~\ref{sec:asymptotic}). Finally, we state the main result that combines all ranges for the number of agents involved (Subsection~\ref{sec:main_result}).

\subsection{Problem}\label{sec:proglem}

Suppose that we are given a connected graph $G=(V,E)$ on $n=|V|$ vertices. Let $k \ge 2$ be any natural number; $k=k(n)$ may be a function of $n$ that tends to infinity as $n\to \infty$. There are $k$ agents, one of which is green and the remaining ones are white. The process starts at round $t=0$ with agents located randomly on vertices of $G$; that is, each agent starts at any vertex $v \in V$ with probability $1/n$, independently of other agents and independently of her colour. Each agent synchronously performs an independent random walk, regardless if she is green or white. In other words, an agent occupying vertex $v \in V$ moves to any neighbour of $v$ with probability equal to $1/\deg(v)$. When green and white agents meet at some \emph{round} $t \ge 0$, the white agent becomes green. In particular, all agents that start at the same vertex as the initial green agent are green from the very beginning. Let $\xi = \xi(G, k)$ be the time it takes for all agents to become green. (Note that $\xi$ is a random variable even if $G$ is a deterministic graph.)

\medskip

We say that the process is at \emph{phase} $\ell$ ($1 \le \ell \le k$) if there are $\ell$ green agents (and so $k-\ell$ white agents). The first phase is usually phase $1$, unless some white agents start at the same vertex as the green agent but this is rare if $k$ is small. Clearly, the process always moves from a smaller phase to a larger phase. If $k$ is small, then typically it takes a lot of rounds for the process to move to another phase but some phases may be skipped. If $k$ is large, then skipping phases is quite common. The process ends at the end of round $\xi$, when we are about to move to phase $k$.

\subsection{Asymptotic Notation}\label{sec:asymptotic}

Our results are asymptotic in nature, that is, we will assume that $n\to\infty$. We are interested in events that hold \emph{asymptotically almost surely} (\emph{a.a.s.}), that is, events that hold with probability tending to 1 as $n\to \infty$.

Given two functions $f=f(n)$ and $g=g(n)$, we will write $f=O(g)$ if there exists an absolute constant $c$ such that $f \leq cg$ for all $n$, $f=\Omega(g)$ if $g=O(f)$, $f=\Theta(g)$ if $f=O(g)$ and $f=\Omega(g)$, and we write $f=o(g)$ or $f\ll g$ if the limit $\lim_{n\to\infty} f/g=0$. In addition, we write $f\gg g$ if $g=o(f)$ and we write $f\sim g$ if $f=(1+o(1))g$.

\subsection{Main Result}\label{sec:main_result}

Our ultimate goal is to analyze the process for the real network associated with some large city. We present preliminary results of such a study in Appendix~\ref{sec:appendix}. In this paper, we try to gain some intuition for and experience with the process by considering it for complete graphs which, perhaps surprisingly, is quite a challenging and tedious task. It would also make sense to investigate grids, binomial random graphs, and random geometric graphs, both theoretically and via simulations.

\bigskip

Let us summarize the main results in one theorem. More detailed statements can be found in the next section.

\begin{theorem}\label{thm:main}
Depending on the parameter $k=k(n)$, the following properties hold:
\begin{itemize}
\item [(a)] If $k=O(1)$, then a.a.s.
$$
\frac {n \ln k}{\omega k} \le \xi(K_n,k) \le \frac {\omega n \ln k}{k},
$$
where $\omega=\omega(n)$ is any function tending to infinity as $n \to \infty$.
Moreover, $\E [\xi(K_n,k)] \sim 2nH_{k-1}/k$, where $H_{k-1} = \sum_{\ell=1}^{k-1} \ell^{-1}$ is the Harmonic number.
\item [(b)] If $1 \ll k \ll n$, then a.a.s.
$$
\xi(K_n,k) \sim \frac {2 n \ln k}{k}.
$$
\item [(c)] If $k \sim cn$ for some $c \in (0,\infty)$, then a.a.s.
$$
\xi(K_n,k) \sim \left(\frac {1}{c} + \frac {1}{\ln(1+c)} \right) \ln n.
$$
\item [(d)] If $k \sim cn$ for some $c=c(n)$ such that $1 \ll c = n^{o(1)}$, then a.a.s.
$$
\xi(K_n,k) \sim \frac {\ln n}{\ln c}.
$$
\item [(e)] If $k = n^{1 + x + o(1)}$ for some $x \in ( \frac {1}{i}, \frac {1}{i-1})$, $i \in \Nn \setminus \{1,2\}$, then a.a.s.
$$
\xi(K_n,k) = i.
$$
\item [(f)] If $k = n^{1 + x + o(1)}$ for some $x \in ( \frac {1}{2}, \infty)$, then a.a.s.
$$
\xi(K_n,k) = 2.
$$
\item [(g)] If $k = n^{1 + 1/i + o(1)}$ for some $i \in \Nn \setminus \{1\}$, then a.a.s.
$$
\xi(K_n,k) =
\begin{cases}
i+1 & \textrm{ if } (k/n)^i < (1-\eps) n \ln n \textrm{ for some } \eps > 0 \\
i & \textrm{ if } (k/n)^i > (1+\eps) n \ln n \textrm{ for some } \eps > 0 \\
i \textrm{ or } i+1 & \textrm{ otherwise.}
\end{cases}
$$
\end{itemize}
\end{theorem}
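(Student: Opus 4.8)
The plan is to reduce the process on $K_n$ to a balls-into-bins problem and to track the number of green agents round by round, exploiting the fact that the critical round for case (g) is precisely the one in which the green population first reaches the coupon-collector scale $n\ln n$. Since the random walk on $K_n$ has the uniform stationary distribution and the agents move independently, at every round $t\ge 0$ the $k$ positions are marginally i.i.d.\ uniform on $V$; the only dependence across consecutive rounds is that an agent cannot occupy the same vertex twice in a row, an effect of relative size $O(1/n)$ that I would absorb into the error terms. Writing $G_t$ for the number of green agents after the meeting in round $t$, the first step is to establish the growth law: as long as $G_t=o(n)$, the greens occupy $\sim G_t$ distinct vertices, each of the $\sim k$ whites lands on a green vertex with probability $\sim G_t/n$, and hence $G_{t+1}\sim (k/n)\,G_t$. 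Starting from $G_0\sim k/n$ this gives $G_t\sim (k/n)^{t+1}$, and since the number of newly converted whites is, conditionally on the green positions, a sum of independent indicators, Chernoff bounds yield multiplicative concentration at each round; as there are only $O(1)$ relevant rounds (here $i$ is a fixed constant), the errors do not accumulate.

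With $k=n^{1+1/i+o(1)}$ one has $G_t\sim (k/n)^{t+1}=n^{(t+1)/i+o(1)}$, which is $o(n)$ precisely for $t\le i-2$; thus the growth law carries through to give $G_{i-1}\sim (k/n)^{i}=:m=n^{1+o(1)}$, sitting exactly at the coupon-collector scale. Two consequences frame the argument. First, $m/k=n^{-1/i+o(1)}\to 0$, so $G_{i-1}\sim m\ll k$ and a.a.s.\ whites survive round $i-1$, giving the lower bound $\xi\ge i$. Second, whether $\xi$ equals $i$ or $i+1$ is decided entirely by whether the $\sim m$ green agents, placed uniformly at round $i$, cover all of $V$: if they do, every white is co-located with a green and all turn green, so $\xi=i$; if some vertex is uncovered, the whites sitting there survive to round $i+1$.

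The heart of the proof is therefore the sharp coupon-collector threshold for $m$ balls in $n$ bins, applied with $m=(k/n)^i$. If $(k/n)^i>(1+\eps)n\ln n$, then a first-moment bound on the number of empty bins, $\sim n\,e^{-m/n}\le n^{-\eps/2}=o(1)$, shows a.a.s.\ full coverage, whence $\xi=i$. If $(k/n)^i<(1-\eps)n\ln n$, then $n\,e^{-m/n}\ge n^{\eps}\to\infty$ and a second-moment argument shows a.a.s.\ this many empty bins; conditioning on the green configuration, the whites land in them as a $\Bin$ count of mean $\sim k\,e^{-m/n}\ge n^{1/i+\eps+o(1)}\to\infty$, so whites survive and $\xi\ge i+1$. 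One then checks that the greens after round $i$ satisfy $G_i\ge k\bigl(1-e^{-m/n}\bigr)=n^{1+1/i+o(1)}\gg n\ln n$, so at round $i+1$ the greens cover $V$ a.a.s.\ and $\xi=i+1$. In the remaining ``critical'' window, where $(k/n)^i/(n\ln n)$ neither exceeds $1+\eps$ nor falls below $1-\eps$, the number of empty bins converges to a nondegenerate (essentially Poisson) limit, so coverage occurs with probability bounded away from both $0$ and $1$ and the method can only conclude $\xi\in\{i,i+1\}$.

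The main obstacle I anticipate is controlling $G_{i-1}$ with enough precision. Because $m=n^{1+o(1)}$ lies exactly at the threshold $n\ln n$, I must propagate a multiplicative $(1+o(1))$ error through the growth recursion and verify that it is genuinely smaller than the $\eps$-gaps used in the two decisive cases. This requires handling both the dependence between the set of green-occupied vertices and the landing positions of the whites---resolved by conditioning on the green configuration first and applying Chernoff to the conditionally-binomial count of conversions---and the mild cross-round correlation coming from the no-stay constraint of the walk. The second delicate point is the second-moment computation for the number of empty bins at the scale $n\ln n$, which is exactly what forces the indeterminate third case and must be carried out carefully to separate the regimes $(1\pm\eps)n\ln n$.
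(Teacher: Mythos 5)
Your argument for case (g) is sound and follows essentially the same route as the paper: track $G_t$ through the multiplicative growth law $G_{t+1}\sim (k/n)G_t$ while $G_t=o(n)$, arrive at $G_{i-1}\sim (k/n)^i$ at the coupon-collector scale, and decide between $i$ and $i+1$ by first- and second-moment estimates on the number of green-free vertices. Your remark that one must also check that some white agent actually sits on an uncovered vertex (a binomial count of mean $n^{1/i+\eps+o(1)}$) is, if anything, spelled out more carefully than in the paper, and your handling of the $(1+o(1))$ multiplicative error against the fixed $\eps$-gap is exactly how the paper absorbs the sensitivity of $e^{-m/n}$ when $m\approx n\ln n$.

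The genuine gap is coverage: the statement is the full Theorem~\ref{thm:main}, and your proposal addresses only case (g), with case (e) implicit in the growth law. Cases (a)--(d) and (f) are absent, and they do not follow from your framework. When $k=o(n)$ the ratio $k/n$ tends to $0$, so there is no per-round multiplicative growth; the broadcasting time is governed by a sum of $k-1$ nearly geometric waiting times with means $n/(\ell(k-\ell))$, and the paper needs tail bounds for sums of geometric variables, an accounting of ``unusual'' and ``lucky'' rounds, and (for larger $k=o(n)$) a five-stage decomposition supported by a separate lemma controlling the number of distinct vertices occupied by a given set of agents. Case (a) additionally asserts $\E[\xi(K_n,k)]\sim 2nH_{k-1}/k$, which requires two-sided couplings and a tail-truncation argument, not just a.a.s.\ bounds. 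Case (c) produces the sum $\ln n/\ln(1+c)+\ln n/c$ from two phases of comparable length (exponential growth of the green population up to scale $n$, followed by a coupon-collector endgame for the surviving whites), neither of which appears in your plan. Finally, case (f) contains a point your growth law would get wrong if extrapolated: for large $x$ one has $G_0\gg n\ln n$, yet $\xi=2$ rather than $1$, because at the end of round $0$ all green agents sit on a single vertex $v$, all of them leave $v$ in round $1$, and a.a.s.\ some white agents move onto the vacated $v$ and survive that round. As written, the proposal establishes only a small fraction of the theorem.
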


\section{Proofs}\label{sec:proofs}

This whole section is devoted to proving Theorem~\ref{thm:main}. We investigate the process running on $K_n$, the complete graph on $n$ vertices. Depending on the number of agents involved (parameter $k=k(n)$), the proof requires different approaches. We will deal with each subrange of $k$ independently. However, before we start, let us state Chernoff's bound, a well-known concentration inequality that we will use often.

\subsection{Concentration inequalities}

Throughout the paper, we will be using the following concentration inequality. Let $X \in \textrm{Bin}(n,p)$ be a random variable with the binomial distribution with parameters $n$ and $p$. Then, a consequence of Chernoff's bound (see e.g.~\cite[Corollary~2.3]{JLR}) is that
\begin{equation}\label{eq:chern}
\Prob( |X-\E X| \ge \eps \E X) \le 2\exp \left( - \frac {\eps^2 \E X}{3} \right)
\end{equation}
for  $0 < \eps < 3/2$.
Moreover, let us mention that the bound holds for the general case in which $X=\sum_{i=1}^n X_i$ and $X_i \in \textrm{Bernoulli}(p_i)$ with (possibly) different $p_i$ (again, e.g.~see~\cite{JLR} for more details).

\subsection{Chebyshev's inequality}

At one point, we will need to use the classic Chebyshev's inequality. It can be applied in a more general scenario but here we only present a specific case that suffices for our application. Let $X$ be any random variable taking values from the set of non-negative integers. Then, for any $\eps > 0$,
\begin{equation}\label{eq:cheb}
\Prob( |X-\E X| \ge \eps \E X) \le \frac {\Var X}{(\eps \E X)^2}.
\end{equation}
In particular, if $\Var X = o( (\E X)^2 )$, then a.a.s.\ $X \sim \E[X]$.

\subsection{Concentration for Case (a) and a subrange of Case (b)}
We start with the case of $k$ being relatively small, that is, $2 \le k \ll (n/\log n)^{1/3}$.

\begin{theorem}\label{thm:Kn:sparse}
Let $\omega_0=\omega_0(n)$ be any function tending to infinity as $n \to \infty$. Then the following properties hold a.a.s.
\begin{itemize}
\item If $k=O(1)$, then
$$
\frac {n \ln k}{\omega_0 k} \le \xi(K_n,k) \le \frac {\omega_0 n \ln k}{k}.
$$
\item If $1 \ll k \ll (n/\ln n)^{1/3}$, then
$$
\xi(K_n,k) = \frac {2 n \ln k}{k} \left( 1 + O\left( \frac {1}{(\ln k)^{1/3}} \right) \right) \sim \frac {2 n \ln k}{k}.
$$
\end{itemize}
\end{theorem}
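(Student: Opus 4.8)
The plan is to follow the process phase by phase, where phase $\ell$ denotes the configuration with exactly $\ell$ green agents. On $K_n$ every agent jumps each round to a uniformly random neighbour, so after any single step the $k$ positions are independent and uniform on $V$ up to the lower-order ``cannot-stay'' correction, and (since $\ell\le k\ll n^{1/3}$) the green agents occupy $\ell$ distinct vertices with high probability. Hence the probability that the phase advances in a given round is
$$
q_\ell \;\sim\; 1-\left(1-\frac{\ell}{n}\right)^{k-\ell} \;\sim\; \frac{\ell(k-\ell)}{n},
$$
the last estimate using $\ell(k-\ell)\le k^2/4\ll n$, which holds throughout $k\ll(n/\ln n)^{1/3}$. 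Because positions re-randomise every round, the number of rounds $T_\ell$ spent in phase $\ell$ is to leading order $\Geom(q_\ell)$, so $\E[T_\ell]\sim 1/q_\ell\sim n/(\ell(k-\ell))$.

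Next I would show that a.a.s.\ the process starts in phase $1$ and no phase is ever skipped, so that $\xi=\sum_{\ell=1}^{k-1}T_\ell$ exactly. The expected number of simultaneous white--green collisions in one round of phase $\ell$ is $\sim\ell(k-\ell)/n$, so the chance that an advancing round produces two or more new green agents is $O(\ell(k-\ell)/n)$; summing over $\ell$ bounds the expected number of skipping events by $O\big(\sum_\ell \ell(k-\ell)/n\big)=O(k^3/n)=o(1)$ under $k\ll(n/\ln n)^{1/3}$. A first-moment (Markov) argument then rules out skipping a.a.s., and the same calculation shows a.a.s.\ no white agent starts on the green agent's vertex.

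For the regime $1\ll k\ll(n/\ln n)^{1/3}$ I would compute both moments of $\xi$ and invoke concentration. Summation gives
$$
\E[\xi]\;\sim\;\sum_{\ell=1}^{k-1}\frac{n}{\ell(k-\ell)} \;=\;\frac{n}{k}\sum_{\ell=1}^{k-1}\left(\frac1\ell+\frac1{k-\ell}\right) \;=\;\frac{2nH_{k-1}}{k}\;\sim\;\frac{2n\ln k}{k}.
$$
Since $\Var(T_\ell)\sim 1/q_\ell^2$ and $\sum_{\ell=1}^{k-1}(\ell(k-\ell))^{-2}=\Theta(k^{-2})$ is dominated by the extreme phases $\ell\in\{1,k-1\}$, we obtain $\Var(\xi)=\Theta(n^2/k^2)$ once the (weak) covariances between the $T_\ell$ are shown to be negligible via the Markov property and one-step re-randomisation. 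Thus $\Var(\xi)/(\E\xi)^2=\Theta\big(1/(\ln k)^2\big)$, and Chebyshev's inequality \eqref{eq:cheb} with $\eps=(\ln k)^{-1/3}$ gives a.a.s.\ $\xi=\E[\xi]\big(1+O((\ln k)^{-1/3})\big)=\tfrac{2n\ln k}{k}\big(1+O((\ln k)^{-1/3})\big)$.

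For $k=O(1)$ there is no concentration, so I would argue the two bounds separately. The upper bound is immediate from Markov's inequality, since $\Prob\big(\xi>\omega_0 \tfrac{n\ln k}{k}\big)\le \E[\xi]/(\omega_0 n\ln k/k)=O(1/\omega_0)\to0$, while the lower bound follows from the first phase alone: $\xi\ge T_1$ with $T_1\sim\Geom(q_1)$ and $q_1\sim(k-1)/n$, so $\Prob\big(T_1\le \tfrac{n\ln k}{\omega_0 k}\big)\sim q_1\cdot\tfrac{n\ln k}{\omega_0 k}=O(\ln k/\omega_0)\to0$. I expect the main obstacle to be the third step, namely making the heuristic ``$\xi$ is a sum of independent geometrics'' rigorous: one must simultaneously control the cannot-stay corrections to each $q_\ell$, the covariances between the $T_\ell$, and the error accumulated over the full $\Theta(n\ln k/k)$ rounds. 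It is precisely this bookkeeping, together with the no-skipping requirement $k^3=o(n)$, that forces the hypothesis $k\ll(n/\ln n)^{1/3}$ and yields the $O((\ln k)^{-1/3})$ error term.
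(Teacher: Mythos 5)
Your outline matches the paper's strategy in its essentials: reduce to a phase-by-phase analysis, show a.a.s.\ the process starts at phase $1$ and never skips a phase (a first-moment bound of order $k^3/n$, equivalent to the paper's bound $O(\omega k^3\ln k/n)$ on the expected number of ``lucky'' rounds over the $N=\Theta(\omega n\ln k/k)$ rounds considered), and treat $\xi$ as a sum of (nearly) independent geometric phase durations with $\E[T_\ell]\sim n/(\ell(k-\ell))$, giving $\E[\xi]\sim 2nH_{k-1}/k$. Your $k=O(1)$ argument (Markov for the upper bound, stochastic domination of $T_1$ by a geometric for the lower bound) is a correct and if anything more elementary substitute for the paper's use of Janson's tail bounds with $\lambda=\sqrt{\omega}$ and $\lambda=1/\omega$. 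The one genuinely different tool is your concentration step for $1\ll k\ll(n/\ln n)^{1/3}$: you use Chebyshev with $\eps=(\ln k)^{-1/3}$ and $\Var(\xi)=O(n^2/k^2)$, whereas the paper applies Janson's exponential tail bounds for sums of independent geometrics with $\lambda=1\pm(\ln k)^{-1/3}$. Once independence of the $T_\ell$ is secured, both yield the stated $O((\ln k)^{-1/3})$ error term (Chebyshev gives $\Var(\xi)/(\eps\E\xi)^2=O((\ln k)^{-4/3})=o(1)$); Janson's bound buys exponentially small failure probabilities, which are not needed here, so your route is a legitimate simplification.

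The step you flag as the main obstacle is indeed where the substance lies, and your proposal does not supply the mechanism. Two specific points. First, the claim that ``the green agents occupy $\ell$ distinct vertices with high probability'' is true per round but cannot be upgraded to hold simultaneously over all $\Theta(\omega n\ln k/k)$ rounds when $k\to\infty$: the expected number of rounds in which two green agents collide, or a green agent lands on a white-occupied vertex, is $\Theta(N k^2/n)=\Theta(\omega k\ln k)\to\infty$. The paper's resolution is to split each round into a green sub-round followed by a white sub-round, call a round \emph{unusual} if either anomaly occurs in the green sub-round, and prove three facts: a.a.s.\ there are only $O(\omega^2 k\ln k)=o(N)$ unusual rounds, a.a.s.\ no unusual round is successful or lucky, and a.a.s.\ no round is lucky at all. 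Second, conditioned on a round being regular, each white agent lands on a green-occupied vertex independently with probability exactly $\ell/(n-1)$, so after discarding the unusual rounds the process couples exactly with a sum of independent $\Geom(p_\ell)$ variables with $p_\ell=\frac{\ell(k-\ell)}{n}(1+O(k^2/n))$; this is what makes ``independent geometrics'' literal rather than heuristic, and it is also what turns your informal ``Markov property and one-step re-randomisation'' into a proof that the covariances vanish. With that device inserted, your Chebyshev-based argument goes through, and the hypothesis $k\ll(n/\ln n)^{1/3}$ enters exactly where you predict (no skipping, and $u=\omega^2k\ln k=o(N)$).
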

\begin{proof}
Let $\omega=\omega(n)$ be any function tending to infinity as $n \to \infty$ (sufficiently slowly so that some bounds claimed below hold and $\omega(n) \le \omega_0(n)$ for all $n$). Suppose that
$$
k \le \frac {1}{\omega} \left( \frac {n}{\ln n} \right)^{1/3}.
$$
Since we aim for upper bounds for $\xi(K_n,k)$ that are at most
$$
N := n \ \frac {\omega \ln k}{k},
$$
we may only consider $N$ rounds of the process. Formally, we stop the process prematurely if the process is \emph{not} finished by the end of round $N$, and in such cases no upper bound is claimed. However, we will show that a.a.s.\ the process does terminate before time $N$ and so the desired bounds are established.

\medskip

Note that the probability that the process starts with no white agents occupying the same vertex as the initial green agent is equal to
$$
\left( 1 - \frac {1}{n} \right)^{k-1} = \exp \left( -\frac {k-1}{n} + O \left( \frac {k}{n^2} \right) \right) \sim 1,
$$
since $k = o((n/\ln n)^{1/3}) = o(n)$. Hence, since we aim for a result that holds a.a.s., we may assume that the process starts at phase 1.

\medskip

In the original definition of the problem agents move simultaneously but it is more convenient to split each round $t$ into two sub-rounds. Let us first move green agents and, once they land on the corresponding vertices, we move white agents. We say that a given round is \emph{unusual} if at least one of the following two events hold during the first sub-round:
$U_1(t)$---two green agents meet,
$U_2(t)$---some green agent moves to a vertex occupied by some white agent.
Let $U(t)=U_1(t) \cup U_2(t)$ be the corresponding event that round $t$ is unusual. If a round is \emph{not} unusual, then we say that it is \emph{regular}.
We also say that a given round is \emph{successful} if we move from phase $\ell$ to phase $\ell+1$. Let $S(t)$ be the event that round $t$ is successful. Similarly, round $t$ is called \emph{lucky} if we move from phase $\ell$ to phase $\ell+x$ for some $x \in \Nn \setminus \{1\}$, and $L(t)$ is the corresponding event.
(Let us stress the fact that in the definition of $S(t)$ and $L(t)$ we do \emph{not} condition that $U(t)$ does not hold. It will be important soon.)

Consider some round $t$ ($1 \le t \le N$) during some phase $\ell$ ($1 \le \ell \le k-1$). The probability that this round is unusual can be estimated as follows:
\begin{eqnarray*}
\Prob (U_1(t)) &\le& \binom{\ell}{2} \cdot \frac {1}{n-1} = O(k^2/n) \\
\Prob (U_2(t)) &\le& (k-\ell) \cdot \ell \cdot \frac {1}{n-1} = O(k^2/n),
\end{eqnarray*}
and so $\Prob(U(t)) = O(k^2/n)$. Hence, the expected number of unusual rounds is equal to
$$
\sum_{t=1}^N \Prob(U(t)) = O \left( N \ \frac {k^2}{n} \right) = O \left( n \ \frac {\omega \ln k}{k} \cdot \frac {k^2}{n} \right) = O (\omega k \ln k).
$$
It follows from Markov's inequality that a.a.s.\ the number of unusual rounds is at most $u := \omega^2 k \ln k$. We refer to this as property (P1).

There will be some unusual rounds but much fewer than the total number of rounds. Moreover, the possibility of an unusual round being successful or lucky is very low so that a.a.s.\ we will not see any unusual rounds being successful or lucky. We refer to this as property (P2). Indeed,
$$
\Prob( S(t) \cup L(t) ~|~ U(t) ) \le \ell \cdot (k-\ell) \cdot \frac {1}{n-1} = O(k^2/n).
$$
It follows that the expected number of unusual rounds that are successful or lucky is at most
\begin{eqnarray*}
\sum_{t=1}^N \Prob(U(t) \cap (S(t) \cup L(t)) ) &=& \sum_{t=1}^N \Prob(U(t)) \cdot \Prob(S(t) \cup L(t) ~|~ U(t)) \\
&=& N \cdot O(k^2/n) \cdot O(k^2/n) = O \left( \frac {\omega k^3 \ln k}{n} \right)  \\
&=& O \left( \frac {1}{\omega^2} \right) = o(1).
\end{eqnarray*}
We get that a.a.s.\ there is no such round by Markov's inequality. Similarly, a.a.s.\ there are no lucky rounds (regardless whether unusual or regular); We use property (P3) to refer to the property that
$$
\Prob (L(t)) \le \binom{k-\ell}{2} \cdot \left( \frac {\ell}{n-1} \right)^2 = O(k^4/n^2),
$$
and so the expected number of lucky rounds tends to zero as $n \to \infty$.

\medskip

Let us now condition on the events we know happen a.a.s., properties (P1)--(P3). We may couple the process with an auxiliary process where we simply ignore unusual rounds---these rounds are not successful nor lucky by property (P2). By property (P1), this auxiliary process has at most
$$
N - u = n \ \frac {\omega \ln k}{k} -\omega^2 k \ln k  \sim N
$$
rounds as $k = o( (n/\ln n)^{1/3} )$. If we reach that many rounds, we will stop the process prematurely. By property~(P3), no round is lucky and so we need to move from phase $1$ all the way to phase $k$ without skipping any phases. Since all rounds are regular and not lucky, at each round during phase $\ell$, we move to round $\ell+1$ with probability
\begin{eqnarray*}
p_{\ell} &:=& \Prob\left( \Bin \left(k-\ell, \frac {\ell}{n-1} \right) = 1 ~\Big|~ \Bin \left(k-\ell, \frac {\ell}{n-1} \right) \le 1 \right) \\
&:=& \frac {\Prob\left( \Bin \left(k-\ell, \frac {\ell}{n-1} \right) = 1 \right)}{\Prob\left( \Bin \left(k-\ell, \frac {\ell}{n-1} \right) = 1 \right)+\Prob\left( \Bin \left(k-\ell, \frac {\ell}{n-1} \right) = 0 \right)} \\
&=& \frac {(k-\ell) \cdot \frac {\ell}{n-1} \cdot \left( 1 - \frac {\ell}{n-1} \right)^{\ell-1}} {(k-\ell) \cdot \frac {\ell}{n-1} \cdot \left( 1 - \frac {\ell}{n-1} \right)^{\ell-1} + \left( 1 - \frac {\ell}{n-1} \right)^{\ell}}  \\
&=& \frac {\ell (k-\ell)}{n} \left( 1 + O(k^2/n) \right) \sim \frac {\ell (k-\ell)}{n};
\end{eqnarray*}
otherwise, we stay in phase $\ell$. It follows that the number of rounds it takes to move to the next phase is equal to $X_{\ell}$, the geometric random variable with expectation
$$
1/p_\ell = \frac {n}{(k-\ell)\ell} (1+O(k^2/n)) \sim \frac {n}{(k-\ell)\ell}.
$$
Hence, the length of the auxiliary process is $\max\{X, N-u\}$, where $X=\sum_{\ell=1}^{k-1} X_{\ell}$ is a sum of independent random variables $X_{\ell} = \Geom(p_\ell)$. It follows that
\begin{eqnarray}
\E [X] &=& \sum_{\ell=1}^{k-1} \frac {1}{p_\ell} = \sum_{\ell=1}^{k-1} \frac {n}{(k-\ell)\ell} (1+O(k^2/n)) \nonumber \\
&=& (1+O(k^2/n)) \ \frac {n}{k} \ \sum_{\ell=1}^{k-1} \left( \frac {1}{k-\ell} + \frac {1}{\ell} \right) \nonumber \\
&=& (1+O(k^2/n)) \  \frac {2n H_{k-1}}{k},\label{eq:expectation_X}
\end{eqnarray}
where $H_{k-1}=\sum_{\ell=1}^{k-1} 1/\ell$ is the Harmonic number. Since $H_{k-1} = \ln k + O(1)$, we get that
$$
\E [X] = (1+O(1/\ln k)) \frac {2n \ln k}{k}.
$$

\medskip

The rest of the proof is straightforward and follows from the concentration inequalities from~\cite{svante}. These bounds are obtained by the classical method of estimating the moment generating function (or probability generating function) and using the standard inequality (an instance of Markov's inequality).

It is proved in~\cite{svante} that for any $\lambda \ge 1$,
\begin{equation}
\Prob (X \ge \lambda \E[X]) \le \exp \Big( -p_* \cdot \E[X] \cdot (\lambda - 1 - \ln \lambda) \Big), \label{eq:Janson}
\end{equation}
where $p_* = \min_{\ell} p_{\ell} = (k-1)/n$. If $k=O(1)$, then one can take $\lambda = \sqrt{\omega}$ to get that
$$
\Prob (X \ge \lambda \E[X]) \le \exp \Big( - \Omega(\sqrt{\omega} \ln k) \Big) = o(1),
$$
and so a.a.s. the original process takes at most
$$
u + \min\{ \lambda \E[X], N-u \} \le \omega^2 k \ln k + (1+O(1/\ln k)) \frac {2 \sqrt{\omega} n \ln k}{k} < \frac {\omega n \ln k}{k} = N
$$
rounds. On the other hand, if $k \gg 1$, then one can take $\lambda = 1 + 1/(\ln k)^{1/3} \sim 1$ to get that
$$
\lambda - 1 - \ln \lambda \ge \left( 1 + \frac {1}{(\ln k)^{1/3}} \right) - 1 - \left( \frac {1}{(\ln k)^{1/3}} - \frac {1}{3(\ln k)^{2/3}} \right) = \frac {1}{3(\ln k)^{2/3}},
$$
since $\ln (1+x) = x - x^2/2 + O(x^3) \ge x - x^2/3$ for sufficiently small $x$. We get that
$$
\Prob (X \ge \lambda \E[X]) \le \exp \Big( - \Omega ( (\ln k) (\ln k)^{-2/3} ) \Big) = \Big( - \Omega ( (\ln k)^{1/3} ) \Big) = o(1),
$$
and so a.a.s.\ the original process takes at most
\begin{eqnarray*}
u + \max\{ X, N-u \} &\le& \omega^2 k \ln k + (1 + 1/(\ln k)^{1/3}) \ \E [X] \\
&=& (1 + O(1/(\ln k)^{1/3}))  \frac {2 n \ln k}{k}
\end{eqnarray*}
rounds. The upper bounds hold.

The lower bounds follow from upper bounds for the lower tail for a sum of geometric random variables. Indeed, it is proved in~\cite{svante} that for any $\lambda \le 1$,
$$
\Prob (X \le \lambda \E[X]) \le \exp \Big( -p_* \cdot \E[X] \cdot (\lambda - 1 - \ln \lambda) \Big).
$$
If $k=O(1)$, then one can take $\lambda = 1/\omega$ to get that
$$
\Prob (X \le \lambda \E[X]) \le \exp \Big( - \Omega( (\ln k) (-\ln (1/\omega)) ) \Big) = o(1).
$$
On the other hand, if $k \gg 1$, then one can take $\lambda = 1 - 1/(\ln k)^{1/3} \sim 1$ to get the same bound as before, namely,
$\lambda - 1 - \ln \lambda \ge \frac {1}{3(\ln k)^{2/3}}$, and so
$$
\Prob (X \le \lambda \E[X]) \le \exp \Big( - \Omega ( (\ln k) (\ln k)^{-2/3} ) \Big) = \Big( - \Omega ( (\ln k)^{1/3} ) \Big) = o(1).
$$
This finishes the proof.
\end{proof}

\subsection{Expectation for Case (a)}

Let us point out that we did not prove concentration for $\xi(K_n,k)$ in Case (a), that is when $k=O(1)$, but the bounds we did prove that hold a.a.s.\ are the best possible. Indeed, in order to illustrate this, we can consider the case $k=2$, though the same conclusion can be derived for any $k=O(1)$. As explained in the proof above, the length of the process can be modelled by the geometric random variable $X = \Geom(1/n)$ with $\E[X] = n$. Since for each $a \in (1,\infty)$ and $b \in (0,1)$ we have
\begin{eqnarray*}
\Prob ( X > a n) &=& \left( 1 - \frac {1}{n} \right)^{a n} \sim e^{-a}\\
\Prob ( X \le b n) &=& 1 - \Prob ( X > b n) = 1 - \left( 1 - \frac {1}{n} \right)^{b n} \sim 1 - e^{-b},
\end{eqnarray*}
no stronger bounds than the ones we proved in Theorem~\ref{thm:Kn:sparse} hold a.a.s.

\medskip

In order to finalize Case (a), let us compute the expected value of $\xi(K_n,k)$ for a constant $k$.
\begin{theorem}
If $k=O(1)$, then
$$
\E [\xi(K_n,k)] = \frac {2n H_{k-1}}{k} + O(\ln n ) \sim \frac {2n H_{k-1}}{k},
$$
where $H_{k-1}=\sum_{\ell=1}^{k-1} 1/\ell$ is the Harmonic number.
\end{theorem}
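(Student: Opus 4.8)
The plan is to reduce the computation of $\E[\xi(K_n,k)]$ to the expectation of the sum of independent geometric random variables $X=\sum_{\ell=1}^{k-1}X_\ell$, $X_\ell=\Geom(p_\ell)$, that already appeared in the proof of Theorem~\ref{thm:Kn:sparse}. Since $k=O(1)$, the multiplicative error in~\eqref{eq:expectation_X} is harmless: $\E[X]=(1+O(k^2/n))\,2nH_{k-1}/k=2nH_{k-1}/k+O(1)$, because $O(k^2/n)\cdot 2nH_{k-1}/k=O(kH_{k-1})=O(1)$. The whole difficulty is therefore to show that the true broadcasting time $\xi$ has expectation within $O(\ln n)$ of $\E[X]$. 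Unlike in parts (b)--(g), we cannot transfer an a.a.s.\ estimate to the mean: the remark preceding this theorem shows that for $k=O(1)$ the variable $\xi$ is genuinely spread out (it behaves like a single geometric with mean $\Theta(n)$), so controlling the expectation forces us to control the full upper tail of $\xi$.

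For the upper bound I would argue by stochastic domination and avoid all bookkeeping of unusual rounds. Whatever the current configuration, while the process is in phase $\ell$ it leaves that phase in the next round with probability at least some $p_\ell^-\sim\ell(k-\ell)/n$ (at least one of the $k-\ell$ white agents lands on a green-occupied vertex), and once it leaves it never returns. Consequently the number of rounds spent in phase $\ell$ is stochastically dominated by $\Geom(p_\ell^-)$, so $\xi\preceq\sum_{\ell=1}^{k-1}\Geom(p_\ell^-)$. Taking expectations and summing as in~\eqref{eq:expectation_X} gives $\E[\xi]\le 2nH_{k-1}/k+O(1)$, which is stronger than the claimed upper bound; in particular this already yields the crude estimate $\E[\xi]=O(n)$ that I use below.

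The lower bound is the crux. Let $A$ be the ``good'' event that the process starts in phase~$1$ and that no round is lucky (no phase is skipped). Using $\Prob(\text{start beyond phase }1)\sim(k-1)/n$ together with property~(P3) and the optional-stopping estimate $\E[\#\{\text{lucky rounds}\}]=O(k^4/n^2)\cdot\E[\xi]=O(1/n)$ (here I use $\E[\xi]=O(n)$ from the upper bound, so there is no circularity), one gets $\Prob(A^c)=O(1/n)$. On $A$ every phase $\ell\in\{1,\dots,k-1\}$ is visited and the time spent there stochastically dominates $\Geom(p_\ell^+)$ with $p_\ell^+\sim\ell(k-\ell)/n$; building these geometrics from the same ``stay/leave'' coins produces a variable $Y=\sum_{\ell=1}^{k-1}\Geom(p_\ell^+)$, with $\E[Y]=2nH_{k-1}/k+O(1)$, such that $\xi\ge Y$ pointwise on $A$. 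Hence $\E[\xi]\ge\E[\xi\,\mathbb{1}_A]\ge\E[Y\,\mathbb{1}_A]=\E[Y]-\E[Y\,\mathbb{1}_{A^c}]$, and it remains to show $\E[Y\,\mathbb{1}_{A^c}]=O(\ln n)$.

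The hard part is precisely this last estimate, and it is where the $O(\ln n)$ is born. I would split the tail at a threshold $t_0=C(n\ln n)/k$ for a large constant $C$: for $t\le t_0$ I bound $\Prob(\{Y\ge t\}\cap A^c)\le\Prob(A^c)=O(1/n)$, contributing $t_0\cdot O(1/n)=O(\ln n)$; for $t>t_0$ I discard $A^c$ and use the exponential upper tail of $Y$. The latter follows because the slowest component has parameter $p_*\asymp k/n$, so $\Prob(Y\ge t)\le\exp(-\Omega(kt/n))$ (read off from~\eqref{eq:Janson} applied to $Y$), whence $\sum_{t>t_0}\Prob(Y\ge t)=O(1)$ once $C$ is large enough. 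Combining the two ranges gives $\E[Y\,\mathbb{1}_{A^c}]=O(\ln n)$ and therefore $\E[\xi]\ge 2nH_{k-1}/k-O(\ln n)$, matching the upper bound. The only genuine obstacle is making the upper-tail estimate uniform enough that the rare event $A^c$ (a bad start or a skipped phase) cannot inflate the mean beyond $O(\ln n)$; everything else is a routine geometric-sum computation.
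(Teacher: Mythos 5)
Your proposal is correct and follows essentially the same route as the paper: both reduce $\E[\xi]$ to the expectation $2nH_{k-1}/k+O(1)$ of a sum of independent geometrics via a two-sided coupling, and both extract the $O(\ln n)$ error in the lower bound by playing a bad event of probability $O(1/n)$ (bad start or a skipped phase) against a truncation of the geometric sum at $\Theta(n\ln n)$ justified by its exponential upper tail. The only difference is bookkeeping: the paper stops its auxiliary process at the first lucky round and uses $\Prob(Z=t\mid R(t)^c)=\Prob(Y=t)$, whereas you restrict to the good event $A$ and bound $\E[Y\,\mathbb{1}_{A^c}]$ directly, which is an equivalent and equally valid way to organize the same estimate.
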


\begin{proof}
Suppose that at the beginning of some round $t$, there are $\ell$ green and $k-\ell$ white agents occupying $m \le k-\ell$ vertices; in particular, the process is at phase $\ell$. Let us first compute $a_{\ell} = a_{\ell}(t)$, the probability that the process stays at phase $\ell$, that is, during this round no white agent becomes green. We will use the notation and terminology introduced in the proof of Theorem~\ref{thm:Kn:sparse}. Recall that $U(t)^c$ is the event that this round is regular (not unusual), that is, during the initial sub-round no green agent moves to a vertex occupied by some white agent and no green agents meet. The white agents occupy $m$ vertices. By moving one green agent at a time, it is clear that
\begin{eqnarray*}
\Prob \left( U(t)^c \right) &=& \prod_{r = m}^{m+\ell-1} \left( 1 - \frac {r}{n-1} \right) \\
&=& 1 - \sum_{r = m}^{m+\ell-1} \frac {r}{n-1} + O \left( \frac {1}{(n-1)^2} \right) \\
&=& 1 - \frac {(2m+\ell-1)\ell}{2n} + O \left( \frac {1}{n^2} \right).
\end{eqnarray*}
Recall also that $S(t)$ and $L(t)$ are the events that this round is successful (that is, at the end of this round we move to phase $\ell+1$) and, respectively, lucky (that is, we move to phase $\ell+x$ for some $x \in \Nn \setminus\{1\}$). Let $b_{\ell} = b_{\ell}(t) := \Prob( S(t) )$ and $c_{\ell} = c_{\ell}(t) := \Prob( L(t) )$. Again, by moving one white agent at a time, we can easily compute the probability that we do not move to the next phase as follows:
\begin{eqnarray*}
\Prob \Big( (S(t) \cup L(t))^c ~|~ U(t)^c \Big) &=& \left( 1 - \frac {\ell}{n-1} \right)^j = 1 - \frac {\ell(k-\ell)}{n} + O \left( \frac {1}{n^2} \right) \\
\Prob \Big( (S(t) \cup L(t))^c ~|~ U(t) \Big) &=& 1 - O \left( \frac {1}{n} \right),
\end{eqnarray*}
and so
\begin{eqnarray*}
a_{\ell} &=& \Prob \Big( (S(t) \cup L(t))^c ~|~ U(t)^c \Big) \ \Prob \left( U(t)^c \right) + \Prob \Big( (S(t) \cup L(t))^c ~|~ U(t) \Big) \ \Prob \left( U(t) \right) \\
&=& \left( 1 - \frac {\ell(k-\ell)}{n} + O \left( \frac {1}{n^2} \right) \right) \left( 1 - \frac {(2m+\ell-1)\ell}{2n} + O \left( \frac {1}{n^2} \right) \right) \\
&& + \left( 1 - O \left( \frac {1}{n} \right) \right) \left( \frac {(2m+\ell-1)\ell}{2n} + O \left( \frac {1}{n^2} \right) \right) \\
&=& 1 - \frac {\ell(k-\ell)}{n} + O \left( \frac {1}{n^2} \right).
\end{eqnarray*}

Similarly, in order to compute $b_{\ell} $ and $c_{\ell} $, let us note that
\begin{eqnarray*}
\Prob \Big( S(t) ~|~ U(t)^c \Big) &=& (k-\ell) \cdot \frac {\ell}{n-1} \cdot \left( 1 - \frac {\ell}{n-1} \right)^{k-\ell-1} = \frac {\ell(k-\ell)}{n} + O \left( \frac {1}{n^2} \right) \\
\Prob \Big( S(t) ~|~ U(t) \Big) &=& O \left( \frac {1}{n} \right).
\end{eqnarray*}
It follows that
\begin{eqnarray*}
b_{\ell} &=& \Prob \Big( S(t) ~|~ U(t)^c \Big) \ \Prob \left( U(t)^c \right) + \Prob \Big( S(t) ~|~ U(t) \Big) \ \Prob \left( U(t) \right) \\
&=& \left( \frac {\ell(k-\ell)}{n} + O \left( \frac {1}{n^2} \right) \right) \left( 1 + O \left( \frac {1}{n} \right) \right) + O \left( \frac {1}{n} \right) \cdot O \left( \frac {1}{n} \right) \\
&=& \frac {\ell(k-\ell)}{n} + O \left( \frac {1}{n^2} \right),
\end{eqnarray*}
and so
$$
c_{\ell} =1-a_{\ell}-b_{\ell}=O \left( \frac {1}{n^2} \right).
$$

\medskip

To simplify the notation, let $X = \xi(K_n,k)$ be the number of rounds of the process. In order to get an upper bound for $\E[X]$, we couple the original process with an auxiliary process in which we move from phase $\ell$ to phase $\ell+1$ with probability
$$
\hat{b}_{\ell} = \ell (k-\ell) / n + O(1/n^2),
$$
and stay at phase $\ell$ otherwise. Let $Y$ be the number of rounds of this auxiliary process. The coupling ensures that $X \le Y$. More importantly, $Y=\sum_{\ell=1}^{k-1} Y_{\ell}$ is a sum of independent random variables $Y_{\ell} = \Geom(\hat{b}_\ell)$. Arguing as in the proof of Theorem~\ref{thm:Kn:sparse} (see~(\ref{eq:expectation_X})), we get that
$$
\E[X] \le \E [Y] = \frac {2nH_{k-1}}{k} + O(1),
$$
and the upper bound follows. Moreover, by applying~(\ref{eq:Janson}) we get that the contribution to $\E[Y]$ from rounds after, say, round $t_0 = t_0 (n) := 10 n \ln n$ is negligible, that is,
\begin{eqnarray}
\sum_{t \ge t_0} t \cdot \Prob(Y = t) &=& t_0 \cdot \Prob(Y \ge t_0) + \sum_{t \ge t_0+1} \Prob(Y \ge t) \nonumber \\
&\le& t_0 \cdot \exp \left( - \frac {(k-1) t_0}{n} (1+o(1)) \right) + \sum_{t \ge t_0} \exp \left( - \frac {(k-1) t}{n} (1+o(1)) \right) \nonumber \\
&\le& (10 n \ln n) \exp \left( - (10+o(1)) \ln n \right) + \sum_{t \ge t_0} \exp \left( - \frac {t}{n} (1+o(1)) \right) \nonumber \\
&\le& o(1) + \sum_{s \ge 10 \ln n} \sum_{t=ns}^{n(s+1)} \exp \Big( - s (1+o(1)) \Big) \nonumber \\
&=& o(1) + O(n) \cdot \sum_{s \ge 10 \ln n} \exp \Big( - s (1+o(1)) \Big) = o(1). \label{eq:tail}
\end{eqnarray}

In order to get a lower bound for $\E[X]$, we couple the original process with another auxiliary process. We stop this new auxiliary process prematurely at round $t$ if that round is lucky (that is, event $L(t)$ holds). Let $R(t)$ be the event that we stopped prematurely by round $t$; that is, $R(t) = \bigcup_{i=1}^t L(i)$. Let $Z$ be the number of rounds of this new auxiliary process. This time, the coupling ensures that $X \ge Z$. We get that
\begin{eqnarray*}
\E[Z] &=& \sum_{t \ge 1} t \cdot \Prob(Z = t) \\
&=& \sum_{t \ge 1} t \cdot \Big( \Prob(R(t)) + \Prob(R(t)^c) \cdot \Prob(Z = t ~|~ R(t)^c) \Big) \\
&\ge& \sum_{t = 1}^{10 n \ln n} t \cdot \Prob(R(t)^c) \cdot \Prob(Z = t ~|~ R(t)^c).
\end{eqnarray*}
Note that for any $t \le 10 n \ln n$,
$$
\Prob(R(t)^c) = \big( 1-O(1/n^2) \big)^t = 1 + O(t/n^2) = 1 + O(\ln n / n).
$$
Moreover, after conditioning on not finishing prematurely, we are back to the first auxiliary process, that is, $\Prob(Z = t ~|~ R(t)^c) = \Prob(Y = t)$. It follows that
\begin{eqnarray*}
\E[Z] &\ge& \Big( 1 + O(\ln n / n) \Big) \sum_{t = 1}^{10 n \ln n} t \cdot \Prob(Y = t) \\
&=& \Big( 1 + O(\ln n / n) \Big) \left( \sum_{t \ge 1} t \cdot \Prob(Y = t) + o (1) \right),
\end{eqnarray*}
by~(\ref{eq:tail}). We get that
$$
\E[Z] \ge \Big( 1 + O(\ln n / n) \Big) \left( \E[Y] + o (1) \right) = \frac {2n H_{k-1}}{k} + O(\ln n) \sim \frac {2n H_{k-1}}{k},
$$
and the proof is complete.
\end{proof}

\subsection{Concentration for the remaining subrange of Case (b)}

Let us now move to the case when $k$ is relatively large, that is, $k \ge n^{1/3} / \ln n$ but $k = o(n)$. (Note that Theorem~\ref{thm:Kn:sparse} requires that $k = o((n/\ln n)^{1/3})$ so the two theorems together cover the case $k=o(n)$, Case~(b).)

\begin{theorem}\label{thm:Kn:dense}
Let $\omega_0=\omega_0(n) \le \ln \ln n$ be any function tending to infinity as $n \to \infty$. Suppose that $k=k(n)$ is such that
$$
\frac {n^{1/3}}{\ln n} \le k \le \frac {n}{\omega_0}.
$$
Then the following property holds a.a.s.
$$
\xi(K_n,k) = \frac {2 n \ln k}{k} \left( 1 + O\left( \frac {1}{\omega_0} \right) \right) \sim \frac {2 n \ln k}{k}.
$$
\end{theorem}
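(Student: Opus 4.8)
The plan is to follow the size of the green population directly rather than analyze one phase transition at a time, since in this denser regime the phase-by-phase bookkeeping of Theorem~\ref{thm:Kn:sparse} breaks down. Write $G_t$ for the number of green agents at round $t$ and $W_t = k-G_t$ for the number of white agents, so that $\xi$ is the first round with $G_t = k$. As in the proof of Theorem~\ref{thm:Kn:sparse}, since $k = o(n)$ the process a.a.s.\ starts at phase~$1$, and since green--green collisions are negligible ($k \le n/\omega_0 = o(n)$) the one-round drift is
$$
\E\big[\,G_{t+1}-G_t \mid G_t=\ell\,\big] = (1+o(1))\,\frac{(k-\ell)\ell}{n},
$$
which is symmetric under $\ell \leftrightarrow k-\ell$. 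The associated logistic fluid curve $\dot g = g(k-g)/n$ has passage time $\tfrac{n}{k}\int_1^{k-1}(\tfrac1g+\tfrac1{k-g})\,dg = (1+o(1))\tfrac{2n\ln k}{k}$ from $g=1$ to $g=k-1$, split into equal halves $\tfrac{n\ln k}{k}$ for the growth ($1\to k/2$) and the decay ($k/2\to k$). The new difficulty, absent from Theorem~\ref{thm:Kn:sparse}, is that the per-round number of conversions is $o(1)$ only near the two ends; its mean $(k-\ell)\ell/n$ reaches $k^2/(4n)$ in the middle, so once $k \gg \sqrt{n}$ almost every round skips many phases and the device of discarding ``unusual'' rounds (which required $\sum_t \Prob(U(t)) = o(N)$) is unavailable.

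For the upper bound I would track $L_t = \ln G_t$ on the growth half. Using $x-x^2/2 \le \ln(1+x) \le x$ together with $\E[\Delta G \mid G_t] = (1+o(1))(k-G_t)G_t/n$ and $\Var(\Delta G \mid G_t) = (1+o(1))(k-G_t)G_t/n$, the conditional mean increment of $L_t$ is $(1+o(1))(k-G_t)/n$ and its conditional variance is of order $(k-G_t)/(nG_t)$. Summing the latter along the trajectory via $dt \approx n\,dg/(g(k-g))$ gives a total conditional quadratic variation of order $\int_{\ell_1}^{k/2} g^{-2}\,dg = O(1/\ell_1)=o(1)$ on the bulk $\ell \in [\ell_1,k/2]$, where $\ell_1 = \ell_1(n)\to\infty$ is chosen to grow slowly (say $\ell_1 = \omega_0^{1/2}$, so that also $\ell_1 = o(n/k)$ and $\ln\ell_1 = o(\ln k/\omega_0)$); the rare rounds with $\Delta G \ge G_t$ are exponentially unlikely by~(\ref{eq:chern}) and can be truncated away. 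Since the deviation $L_t - \ln g_t$ has a contractive drift and $o(1)$ total quadratic variation, the differential-equation method shows $\ln G_t = \ln g_t + o(1)$ uniformly on the bulk, so its passage time is $(1+o(1))\tfrac{n}{k}\ln(k/(2\ell_1)) = (1+o(1))\tfrac{n\ln k}{k}$. The two extreme stretches $\ell \le \ell_1$ (growth) and $W \le \ell_1$ (decay), where the logarithmic fluid degenerates, are genuinely in the phase-by-phase geometric regime (skips are rare there because $\ell_1 = o(n/k)$); the sum-of-geometrics estimate from the proof of Theorem~\ref{thm:Kn:sparse} bounds their duration by $(1+o(1))\tfrac{n}{k}H_{\ell_1} = O(\tfrac{n}{k}\ln\ell_1)$, which is $o\big(\tfrac{n\ln k}{k\omega_0}\big)$, and Chebyshev's inequality~(\ref{eq:cheb}) gives the matching concentration. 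Running the mirror quantity $\ln W_t$ on the decay half and adding the four pieces yields $\xi \le \tfrac{2n\ln k}{k}\big(1+O(1/\omega_0)\big)$ a.a.s.

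For the lower bound a self-contained argument avoids tracking the fluid. First, from $\E[G_{t+1}\mid G_t] \le G_t\big(1+(1+o(1))k/n\big)$ the normalized process $G_t\big(1+(1+o(1))k/n\big)^{-t}$ is a supermartingale, hence $\E[G_t] \le \exp((1+o(1))kt/n)$ and Markov's inequality gives $\Prob(G_t \ge k/2) \le (2/k)\exp((1+o(1))kt/n) = o(1)$ for every $t \le \tfrac{n}{k}\ln k\,(1-\eps)$. Thus a.a.s.\ at least $k/2$ agents are still white at time $t_1 := \lfloor \tfrac{n}{k}\ln k\,(1-\eps)\rfloor$. Second, in any round a given white agent lands on one of the at most $k$ green-occupied vertices with probability at most $(1+o(1))k/n$, and, conditionally on the past, these conversion events are dominated by independent Bernoulli thinning; hence the number of agents still white at time $t_1+s$ stochastically dominates $\Bin\big(k/2,\,(1-(1+o(1))k/n)^{s}\big)$. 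Taking $s = \tfrac{n}{k}\ln k\,(1-\eps)$ makes the mean $\tfrac{k}{2}\,k^{-(1-\eps)(1+o(1))} = \tfrac12 k^{\,\eps-o(1)} \to \infty$, so a.a.s.\ some agent is still white and $\xi \ge t_1+s = \tfrac{2n\ln k}{k}(1-\eps)$. As $\eps>0$ is arbitrary, this matches the upper bound.

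The step I expect to be the main obstacle is the bulk of the upper bound, where a single trajectory must simultaneously accommodate the slow, genuinely stochastic regime near the two ends (increments $o(1)$, phases visited one at a time, as in Theorem~\ref{thm:Kn:sparse}) and the fast fluid regime in between (increments $\omega(1)$, many phases skipped per round). The sparse-case trick of discarding unusual rounds fails completely once $k \gg \sqrt{n}$, so the concentration must come from the martingale structure of $\ln G_t$ and $\ln W_t$; the delicate point is that the upward jumps produced by large skips are not uniformly bounded, so one must verify that they contribute negligibly to the quadratic variation and can be truncated, and that the deviation from the fluid is genuinely contractive. A secondary difficulty is that the target is the true relative error $O(1/\omega_0)$ rather than merely $o(1)$: the single crossover threshold $\ell_1$ must be placed so that both the boundary geometric contribution $O(\tfrac{n}{k}\ln\ell_1)$ and the residual fluid fluctuation $O\big(\tfrac{n}{k}\,\ell_1^{-1/2}\big)$ are $o\big(\tfrac{n\ln k}{k\omega_0}\big)$, which is possible exactly because $\ln k = \Theta(\ln n)$ dominates $\omega_0 \le \ln\ln n$ throughout this range.
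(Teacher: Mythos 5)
Your lower bound is correct and is genuinely different from (and simpler than) the paper's: the supermartingale $G_t(1+k/(n-1))^{-t}$ plus Markov's inequality for the first half, followed by the observation that, conditionally on the green configuration in each round, the survivals of distinct white agents are independent with probability at least $1-k/(n-1)$, so the surviving whites dominate a binomial whose mean $\tfrac k2 k^{-(1-\eps)(1+o(1))}$ still diverges. This cleanly yields $\xi \ge \frac{2n\ln k}{k}(1-O(1/\omega_0))$ and avoids the paper's stage machinery entirely for that direction. The paper instead derives both bounds from one stage decomposition (five stages keyed to the size of $\E Y_\ell$), doing sharp work only where $\E Y_\ell \le 1/\ln n$ (a chunked analysis in which double and triple skips are counted and shown negligible) and where $\E Y_\ell \ge \ln^2 n$ (per-round Chernoff), and bounding the crossover stages only up to constants because they occupy an $O(\ln\ln n)$-fraction of the logarithmic scale.

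The gap is in your upper bound, and it sits exactly where you predicted. First, your bound $\sum_t \Var(\Delta L_t\mid\mathcal F_t) = O(1/\ell_1)$ is obtained by substituting the fluid time-change $dt \approx n\,dg/(g(k-g))$, i.e.\ by assuming the process spends $\approx n/(\ell(k-\ell))$ rounds at phase $\ell$ --- which is the concentration statement you are trying to prove. The only a priori bound available before that is $\sum_t \frac{k-G_t}{nG_t} \le \ell_1^{-1}\sum_t\E[\Delta L_t\mid\mathcal F_t] \approx \ell_1^{-1}\ln k$, which with $\ell_1=\omega_0^{1/2}$ diverges and gives nothing; closing the circle needs a stopping-time or dyadic-block bootstrap that you do not supply. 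Second, in the window where $G_t$ is polylogarithmic the increments of $\ln G_t$ are not uniformly small --- a single round can carry $\Delta G$ comparable to $G_t$ with probability only polynomially small in $\ln n$, and there are far more than polylogarithmically many rounds in that window when $k$ is near $n^{1/3}$ --- so Azuma/Freedman with your stated quadratic variation does not apply directly, and ``truncated away by~(\ref{eq:chern})'' does not survive the union bound over rounds there. The paper's stage decomposition exists precisely to quarantine this window into negligible stages that are bounded only up to multiplicative constants, reserving sharp estimates for the two regimes where either multi-jumps are provably rare or per-round concentration holds. Your architecture can likely be completed, but the step you flag as ``the main obstacle'' is the proof, and it is missing.
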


\medskip

Before we prove this theorem, let us start with the following useful observation.

\begin{lemma}\label{lem:distr_of_agents}
Let $\omega_0=\omega_0(n)$ be any function tending to infinity as $n \to \infty$.
Let $s=s(n)$ be such that $\ln n \le s \le n / \omega_0$, and let $r=r(n) = 3s / \ln \omega_0 = o(s)$. Fix any $s$ agents (regardless whether they are green or white). The probability that they occupy at most $s-r$ vertices (in any given round in the future) is at most $1/n^2$.

In particular, it follows immediately by the union bound that the following property holds a.a.s.\ during the future $n$ rounds: the number of vertices occupied by the selected agents is more than
$$
s - O(r) = s (1+O(1/\ln \omega_0)) \sim s.
$$
(Note that, trivially, it is at most $s$. Moreover, different agents could be selected in each round and the number of them can vary, as long as they are selected before they actually make a move.)
\end{lemma}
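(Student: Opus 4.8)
The plan is to treat the positions of the $s$ selected agents in a fixed round as an occupancy (balls-in-bins) problem and to control the number of ``collisions''. The only structural fact I would use is that a single step of the walk on $K_n$ sends an agent to a uniformly chosen vertex among the other $n-1$: hence, conditioning on the entire history up to the previous round, each selected agent's position in the current round is, independently of the others, distributed so that no vertex carries probability exceeding $1/(n-1)$. (For the initial round the positions are exactly uniform, which is even better.) Because this bound is stated conditionally on the past and the walks are independent, the estimate holds uniformly over all rounds and for whatever set of $s$ agents is selected before the move, which is exactly what the ``in particular'' part needs.

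Next I would expose the $s$ agents one at a time within the fixed round. Let $D_i$ be the indicator that the $i$-th exposed agent lands on a vertex already occupied by one of the first $i-1$ agents. Then the number of distinct occupied vertices is exactly $s-\sum_{i=1}^{s} D_i$, so the bad event ``at most $s-r$ vertices are occupied'' is precisely $\{\sum_{i=1}^{s} D_i \ge r\}$. Conditioning on the positions of the first $i-1$ agents, which occupy at most $i-1$ vertices, the equidistribution bound gives
\[
\Prob\big(D_i = 1 \mid \text{first } i-1 \text{ agents}\big) \le \frac{i-1}{n-1}.
\]
Consequently $\sum_i D_i$ is stochastically dominated by a sum of independent Bernoulli variables with these conditional means, whose expectation is $\mu := \binom{s}{2}/(n-1) \le s/\omega_0$, using $s \le n/\omega_0$.

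It then remains to bound the far upper tail. Since $r = 3s/\ln\omega_0 \gg \mu$, the regime of the two-sided estimate~\eqref{eq:chern} (valid only for $\eps < 3/2$) is not available, so I would instead invoke the standard far-tail Chernoff bound $\Prob(\sum_i D_i \ge r) \le (e\mu/r)^r$ for the dominating independent sum. Substituting $\mu \le s/\omega_0$ and $r = 3s/\ln\omega_0$ gives $e\mu/r \le e\ln\omega_0/(3\omega_0)$, and therefore
\[
-\ln \Prob\Big(\sum_i D_i \ge r\Big) \ \ge\ r\,\ln\!\frac{3\omega_0}{e\ln\omega_0} \ =\ 3s\Big(1 - \tfrac{\ln\ln\omega_0}{\ln\omega_0} + O(\tfrac{1}{\ln\omega_0})\Big) \ =\ (3-o(1))\,s.
\]
Since $s \ge \ln n$, this exponent is at least $2\ln n$ once $n$ is large, which yields the claimed $\Prob(\sum_i D_i \ge r) \le 1/n^2$. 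The role of the constant $3$ in the definition of $r$ is exactly to push the exponent past $2\ln n$: any constant exceeding $2$ would suffice, and $3$ leaves comfortable slack.

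Finally, the ``in particular'' statement follows by a union bound, summing the $1/n^2$ failure probability over the $n$ rounds in question for a total of at most $1/n \to 0$; since the agents in each round are selected before the corresponding move, the single-round estimate applies verbatim to whatever set is chosen. As $r = 3s/\ln\omega_0 = o(s)$, we conclude that a.a.s.\ the number of occupied vertices is $s - O(r) = s\,(1 + O(1/\ln\omega_0)) \sim s$ throughout. The one genuine obstacle here is the dependence among the pairwise collision events, which blocks a direct application of~\eqref{eq:chern}; the sequential exposure together with the conditional domination by independent Bernoulli variables is precisely what removes this dependence and reduces the problem to a routine Chernoff estimate.
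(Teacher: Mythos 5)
Your argument is correct, but it takes a genuinely different route from the paper's. The paper disposes of the lemma with a single first-moment computation: it takes a union bound over all $\binom{n}{s-r}$ candidate sets of $s-r$ vertices and bounds the probability that all $s$ agents land inside such a set by $\left(\frac{s-r}{n-1}\right)^s$, giving
$$
\binom{n}{s-r} \left( \frac {s-r}{n-1} \right)^s \le e^{s} \left( \frac {n}{s} \right)^{-r} \le \exp\left(s - r \ln \omega_0\right) = e^{-2s} \le n^{-2},
$$
using only $s \le n/\omega_0$ and $s \ge \ln n$. You instead count collisions by sequential exposure, observe that ``at most $s-r$ occupied vertices'' is exactly ``at least $r$ collisions'', dominate the collision count by a sum of independent Bernoulli variables of total mean $\mu = \binom{s}{2}/(n-1) \le s/\omega_0$, and finish with the far-tail Chernoff bound $(e\mu/r)^r$. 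Both arguments rest on the same single probabilistic input---conditionally on the past, each agent's destination puts mass at most $1/(n-1)$ on any vertex, independently across agents---and both deliver an exponent that comfortably beats $2\ln n$. The paper's computation is shorter and yields the clean bound $e^{-2s} \le n^{-2}$ for every $n$, whereas your exponent is $(3-o(1))s$ and so needs $n$ large enough for the $o(1)$ correction (of order $\ln\ln\omega_0/\ln\omega_0$) to drop below $1$; this is immaterial for an a.a.s.\ statement. What your route buys is robustness and transparency: it isolates exactly where dependence enters (the collision indicators) and would survive in settings where one only has a conditional bound on per-vertex landing probabilities rather than near-uniformity, at the cost of invoking a tail inequality for the binomial beyond the $\eps < 3/2$ form~\eqref{eq:chern} quoted in the paper (the bound $(e\mu/r)^r$ you use is standard and correct).
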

\begin{proof}
The probability that the selected agents land on at most $s-r$ vertices is at most
\begin{eqnarray*}
\binom{n}{s-r} \left( \frac {s-r}{n-1} \right)^s &\le& \left( \frac {en}{s-r} \right)^{s-r} \left( \frac {s-r}{n} \right)^s (1+O(1/n))^s \\
&\le& e^{s-r+O(s/n)} \left( \frac {n}{s-r} \right)^{-r} \le e^{s} \left( \frac {n}{s} \right)^{-r} \\
&\le& \exp \left( s - r \ln \omega_0 \right) = \exp (-2s) \le n^{-2},
\end{eqnarray*}
as claimed.
\end{proof}

\medskip

Let us now come back to the main task, namely, bounding the length of the process. Due to the symmetry (that will be discussed in the proof below), we will concentrate on reaching phase $\ell=k/2$. Before we move to a formal argument, in order to build some intuition let us present a heuristic argument. Based on our experience so far, we expect that the number of rounds that are needed to move from phase $\ell=\ell_1$ to phase $\ell=\ell_2 \le k/2$, with $\ell_1 \ll \ell_2$, should be close to
\begin{eqnarray*}
\sum_{\ell=\ell_1}^{\ell_2-1} \frac {n}{(k-\ell)\ell} &=& \frac {n}{k} \ \sum_{\ell=\ell_1}^{\ell_2-1} \left( \frac {1}{k-\ell} + \frac {1}{\ell} \right) = \frac{n}{k} \left( O \left( \frac {\ell_2 - \ell_1}{k} \right) + H_{\ell_2-1} - H_{\ell_1-1} \right) \\
&=& \frac{n}{k} \Big( O(1) + \ln(\ell_2) - \ln(\ell_1) \Big) \sim \frac {n \ln(\ell_2/\ell_1)}{k}.
\end{eqnarray*}
On the other hand, the total number of rounds is asymptotic to $2 n \ln k / k = \Theta( n \ln n / k)$, since $k \ge n^{1/3}/\ln n$. Hence, if $\ell_2/\ell_1 = n^{o(1)}$, then the length of the part of the process between phase $\ell_1$ and $\ell_2$ is expected to be negligible. It should be stressed that this is not a formal argument, just a heuristic that suggests that such rounds are going to be negligible. Formal arguments will be provided below.

\medskip

For simplicity, we will distinguish a few stages of the process. Some of them (Stages 1, 2, and 4) will be difficult to control and so we will only manage to estimate the time they last up to a multiplicative constant. Fortunately, they will be negligible anyway. The other ones (Stage 3 and 5) are crucial and well-behaved.

Since we would like to provide one proof that covers the whole range of $k$, for some specific values of $k$ some stages actually do \emph{not} happen. It might be confusing at first so let us start with a brief discussion for each range of $k$. Stage~1 always happens and during this stage we reach $\ell = t_1 = \ln n$.
\begin{itemize}
\item $n^{1/3}/\ln n \le k \le \sqrt{2n/\ln n}$: We reach $\ell = t_2 = \ln^4 n$ at the end of Stage 2 and then finish with $\ell = t_3 = k/2$ at the end of Stage 3.
\item $\sqrt{2n/\ln n} < k \le \sqrt{2n \ln^2 n}$: We reach $\ell = t_2 = \ln^4 n$ at the end of Stage 2, $\ell = t_3 = n / (k \ln n)$ at the end of Stage 3, and then finish with $\ell = t_4 = k/2$ at the end of Stage 4.
\item $\sqrt{2n \ln^2 n} < k \le n / \ln^5 n$: We reach $\ell = t_2 = \ln^4 n$ at the end of Stage~2, $\ell = t_3 = n / (k \ln n)$ at the end of Stage~3, $\ell = t_4 = n \ln^2 n / k$ at the end of Stage~4, and then finish with $\ell = t_5 = k/2$ at the end of Stage~5.
\item $n / \ln^5 n < k \le n / \ln^2 n$: We reach $\ell = t_2 = n / (k \ln n)$ at the end of Stage~2, there is no Stage~3, we reach $\ell = t_4 = n \ln^2 n / k$ at the end of Stage~4, and then finish with $\ell = t_5 = k/2$ at the end of Stage~5.
\item $n / \ln^2 n < k \le n / \omega_0$: There is no Stage~2 nor Stage~3, we reach $\ell = t_4 = n \ln^2 n / k$ at the end of Stage~4, and then finish with $\ell = t_5 = k/2$ at the end of Stage~5.
\end{itemize}

\medskip

Finally, we are ready to move to the proof.

\begin{proof}[Proof of Theorem~\ref{thm:Kn:dense}]
Similarly to the proof of Theorem~\ref{thm:Kn:sparse}, since $k=o(n)$, we may assume that the process starts at phase 1.
As promised, we will distinguish a few stages of the process.

\medskip

\noindent \textbf{Stage 1}: This stage covers rounds until the number of green agents is at least $t_1 : = \ln n$, that is, when we reach phase $t_1$. Since $\ln t_1 = \ln \ln n = o(\ln n)$, the length of the process during this stage is expected to be negligible. As a result, in order to get a lower bound for $\xi(K_n,k)$ we simply ignore this stage, start the process with $t_1$ green agents, and couple such auxiliary process with the original one. If the auxiliary process is long so is the original one.

\medskip

In order to get an upper bound, we may terminate the process prematurely if it is not over by the end of round $N := 3 n \ln k / k$. As in the proof of Theorem~\ref{thm:Kn:sparse}, we split each round into two sub-rounds and let green agents move first. We observe that if the number of green agents stays below $t_1$, then green agents meet during the first $N$ rounds with probability at most
$$
N \cdot {t_1 \choose 2} \cdot \frac {1}{n-1} \le \frac {N t_1^2}{n} = \frac {3 t_1^2 \ln k}{k} \le \frac {3 \ln^{4} n}{n^{1/3}} = o(1).
$$
Since our process has to be finished after $N$ rounds (either naturally or prematurely), we may assume that no green agents meet during this stage of the process. On the other hand, since $k$ is large (recall that $k \ge n^{1/3}/\ln n$), when green agents move during the initial sub-round, they might move to a vertex occupied by some white agent (see event $U_2(t)$ defined in the proof of Theorem~\ref{thm:Kn:sparse}). If this happens, then it slightly slows the process down but we will show that it does so negligibly.

Consider any round during phase $\ell \le t_1$; there are $k-\ell \sim k$ white agents and $\ell \le t_1 = \ln n = o(k)$ green agents. By Lemma~\ref{lem:distr_of_agents}, we may assume that white agents always occupy $(k-\ell)(1+o(1)) \sim k$ vertices. Once they move, there are $(k-\ell)(1+o(1))$ white agents that do \emph{not} overlap with any green agent. If any of them moves to a vertex occupied by a green agent, this phase ends. It follows that the probability that the process stays at phase $\ell$ is at most
$$
q_\ell := \left( 1 - \frac {\ell}{n-1} \right)^{(k-\ell)(1+o(1))} \le \exp \left( - (1+o(1)) \frac {\ell(k-\ell)}{n} \right).
$$
Note that $e^{-x} \le 1 - 3x/5 < 1 - x/2$, provided that $x \in [0,1]$. Hence, if $\ell(k-\ell)\le n$, then
$$
q_\ell \le 1 - (1+o(1)) \frac {3\ell(k-\ell)}{5n} \le 1 - \frac {\ell(k-\ell)}{2n}.
$$

If $t_1(k-t_1) \le n$, then trivially $\ell (k-t_1) \le t_1(k-t_1) \le n$ and so the above bound for $q_\ell$ is always satisfied. In this case, we finish phase $\ell$ (and move to phase $\ell+x$ for some $x \in \Nn$) with probability at least $p_\ell \ge \ell(k-\ell)/(2n)$. Arguing as in the proof of Theorem~\ref{thm:Kn:sparse} we get that a.a.s.\ this stage takes at most
\begin{eqnarray*}
(1+o(1)) \sum_{\ell = 1}^{t_1-1} \frac {1}{p_\ell} &\le& (1+o(1)) \frac {2n}{k} \sum_{\ell = 1}^{t_1-1} \left( \frac {1}{k - \ell}  + \frac {1}{\ell} \right) \sim \frac {2n}{k} \left( \sum_{\ell = 1}^{t_1-1} \frac {1}{\ell}  + O(t_1/k) \right) \\
&\sim& \frac {2n}{k} \sum_{\ell = 1}^{t_1-1} \frac {1}{\ell} \sim \frac {2n \ln t_1}{k} = \frac {2n \ln \ln n}{k} \\
&=& O \left( N \ \frac {\ln \ln n}{\ln n} \right) = O(N/\omega_0) = o(N)
\end{eqnarray*}
rounds. Suppose then that $t_1(k-t_1) > n$, that is, $k$ is almost linear; in particular, $k \ge n / \ln n$. The argument above implies that a.a.s.\ we quickly reach phase $\ell_0$ for which $\ell_0(k-\ell_0) \ge n$. We will show now that we must reach the end of this phase in at most $4 \ln n$ additional rounds which is also negligible in comparison to $N$. Indeed, if $\ell(k-\ell) \ge n$, then
$$
q_\ell \le (1+o(1)) e^{-1} \le 1/2.
$$
It follows that during each round at least one white agent becomes green with probability at least $p_\ell \ge 1/2$. Hence, the expected number of white agents that turned green during $4 \ln n$ rounds can be stochastically bounded from below by the random variable $X \in \Bin(4 \ln n, 1/2)$. Since $\E[X] = 2 \ln n$, using Chernoff's bound~(\ref{eq:chern}) with $\eps=1/2$ we get that a.a.s.\ $X \ge t_1 = \ln n$. It follows that a.a.s.\ this phase will finish in at most $4 \ln n = O(N/\omega_0) = o(N)$ additional rounds.

\medskip

\noindent \textbf{Stages $\ge 2$}: For the remaining stages, we will continue using Lemma~\ref{lem:distr_of_agents}, which allows us to assume that during phase $\ell \ge t_1 = \ln n$ the number of vertices occupied by green agents is always more than $\ell (1-3/\ln \omega_0) \sim \ell$ (and, of course, at most $\ell$). It follows that any white agent becomes green with probability at least $\ell (1-3/\ln \omega_0)/(n-1) \sim \ell/n$ but at most $\ell/(n-1) \sim \ell/n$. It follows that the number of white agents that become green in one round is equal to
$$
Y_\ell \in \Bin \left( k-\ell, (1+o(1)) \ \frac {\ell}{n} \right)
$$
with $\E[Y_\ell] \sim (k-\ell)\ell/n$. Formally, in order to get an upper bound for $\xi(K_n,k)$, we need to couple the process using a sequence of random variables $\bar{Y}_\ell$ whereas for a lower bound we need to use $\hat{Y}_\ell$, where
$$
\bar{Y}_\ell \in \Bin \left( k-\ell, \frac {\ell}{n-1} \left( 1- \frac {3}{\ln \omega_0} \right) \right), \qquad \hat{Y}_\ell \in \Bin \left( k-\ell, \frac {\ell}{n-1} \right).
$$
However, in order to simplify the proof, we will use $Y_\ell$ instead of repeating the argument for both $\bar{Y}_\ell$ and $\hat{Y}_\ell$.

\medskip

The definition of random variables $Y_\ell$ does not change but it will still be convenient to distinguish some stages of the process depending on how large the expected value of $Y_\ell$ is. These stages will be treated differently. First, in order to make room for a technical argument, we need to reach $\ell=\ln^4 n$ in Stage~2. The length of this stage is not predictable but, since $\ln ((\ln^4 n) / (\ln n)) = 3 \ln \ln n = o(\ln n)$, we will show that it is negligible anyway. During Stage~3, the expected value of $Y_\ell$ is small, namely at most $1/\ln n$, so skipping phases is not common. The length of this stage can be well estimated. Once the expectation reaches $1/\ln n$ but is less than $\ln^2 n$, skipping phases may occur but the process is more challenging to analyze. Fortunately, since $\ln ((\ln^2 n) / (1/\ln n)) = 3 \ln \ln n = o(\ln n)$,  the length of this stage (Stage~4) will turn out to be negligible and so there is no need for a detailed analysis. Once we reach the expectation at least $\ln^2 n$, we reach Stage~5 when skipping phases becomes predictable and so the length of this stage is predictable too.

Finally, as we already mentioned, we stop the argument when more than $k/2$ agents become green. In the following argument, the only tool that we use is Chernoff's bound, which only depends on the expected value of the binomial random variable $Y_\ell$. Hence, due to the symmetry of the expected value of the binomial random variable $Y_\ell$, the second part of the process takes asymptotically the same amount of time. Indeed, if there are $\ell$ green agents, the number of white agents that become green is $Y_\ell \in \Bin(k-\ell, (1+o(1)) \ell/n)$ with $\E[Y_\ell] \sim (k-\ell)\ell/n$. On the other hand, if there are $\ell$ white agents, the number of white agents that become green is $Y_{k-\ell} \in \Bin(\ell, (1+o(1)) (k-\ell)/n)$ with $\E[Y_{k-\ell}] \sim (k-\ell)\ell/n$, as before.

\medskip

\noindent \textbf{Stage $2$}: This stage lasts until the number of green agents is at least
$$
t_2 := \min \left\{ \ln^4 n, \frac {n}{k \ln n} \right\}.
$$
As already mentioned, if $k > n / (\ln^2 n)$, then $t_2 \le \ln n = t_1$ and so it is possible that this stage actually does \emph{not} happen. Moreover, the length of this stage is negligible. We treat it independently since the number of green agents is still too small for the argument used in the next stage to be applied.

Note that $\ell(k-\ell) \le \ell k \le t_2 k \le n / \ln n \le n$. Arguing as in Stage~1, in each round we move to another phase with probability $p_\ell > \ell(k-\ell)/(2n)$ and so a.a.s.\ this stage finishes in at most $(2+o(1)) n \ln t_2 / k \le (8+o(1)) n \ln \ln n / k = O(N \ln \ln n / \ln n) = O(N/\omega_0) = o(N)$ rounds.

\medskip

\noindent \textbf{Stage $3$}: This stage lasts until the number of green agents is at least
$$
t_3 := \min \left\{ \frac {n}{k \ln n}, \frac {k}{2} \right\}.
$$
As already mentioned, if $k > n / \ln^5 n$, then $t_3 = t_2 = n / (k \ln n)$ and so it is possible that this stage does \emph{not} happen. If it does occur, then its length is asymptotically what we expect.

\medskip

Suppose that at some point of the process there are $\ell \ge \ln^4 n$ green agents. We will consider a chunk of
$$
r := \frac {n}{(k-\ell)\ell} \cdot \ln^3 n
$$
rounds but we stop the process prematurely if the number of green agents exceeds $\ell + 2 \ln^3 n$. Clearly, during this part of the process the number of green agents is equal to $\ell + O(\ln^3 n) \sim \ell$. As a result, since
$$
\ell (k-\ell) \le \ell k \le (t_3 + 2 \ln^3 n) k \le (1+o(1)) n / \ln n = o(n),
$$
the following properties hold during this part of the process:
\begin{eqnarray*}
\Prob (Y_\ell = 1) &\sim& (k-\ell) \cdot \frac {\ell}{n} \cdot \left( 1 - \frac {\ell}{n} \right)^{k-\ell-1} \sim \frac {(k-\ell)\ell}{n} =: w_1 \\
%\Prob (Y_\ell = 2) &\sim& \binom{k-\ell}{2} \cdot \left( \frac {\ell}{n} \right)^2 \cdot \left( 1 - \frac {\ell}{n} \right)^{k-\ell-2} \sim \frac {1}{2} \cdot \left( \frac {(k-\ell)\ell}{n} \right)^2 \\
%\Prob (Y_\ell \ge 3) &\le& \binom{k-\ell}{3} \cdot \left( \frac {\ell}{n} \right)^3 \le \frac {1}{3!} \cdot \left( \frac {(k-\ell)\ell}{n} \right)^3 \\
\Prob (Y_\ell \ge u) &\le& \binom{k-\ell}{u} \cdot \left( \frac {\ell}{n} \right)^{u} \le \left( \frac {e (k-\ell) \ell}{n u} \right)^{u} \le \left( \frac {2e k t_3}{n u} \right)^{u} \le \left( \frac {2e}{u \ln n} \right)^{u} =: w_u,
\end{eqnarray*}
for any $2 \le u \le \ln n$. In particular,
$$
\Prob (Y_\ell \ge \ln n) \le \left( \frac {2e}{\ln^2 n} \right)^{\ln n} \le (\ln n)^{-\ln n} = \exp( - (\ln \ln n) (\ln n) ) \le 1/n.
$$
Since we are only concerned with $N = O(n \ln k / k) = o(n)$ rounds, we may assume that $Y_\ell$ never exceeds $\ln n$.

The number of times we move from phase $\ell$ to $\ell+1$ can be modelled by random variable $X_1 \sim \Bin(r, w_1)$ with
$$
\E[X_1] = rw_1 \sim \ln^3 n.
$$
For a given $u \in \{2,3\}$, the number of times we move from phase $\ell$ to $\ell+i$, for some $i \ge u$, can be upper bounded by a random variable $X_{u} \sim \Bin(r, w_u)$ with
$$
\E[X_u] = rw_u = O(w_1^{u-1} \ln^3 n) = O(\ln^{4-u} n) = o(\ln^3 n).
$$
Hence, it follows from Chernoff's bounds that with probability at least $1-1/n$, $X_1 \sim \ln^3 n$, $X_2 = O(\ln^2 n)$, and $X_3 = O(\ln n)$.
It follows that with probability at least $1-1/n$,  the number of agents that become green during these $r$ rounds is asymptotic to
$$
X_1 + O(X_2) + O( (\ln n) \cdot X_3)  \sim \ln^3 n + O(\ln^2 n) + O( (\ln n) \cdot (\ln n)) \sim \ln^3 n.
$$
We conclude that we do not finish this chunk of rounds prematurely with probability at least $1-1/n$. Since there are $o(n)$ chunks of rounds (in fact, there are even only $o(n)$ rounds), a.a.s.\ we never finish prematurely. Moreover, note that it takes on average
$$
\frac {r}{(1+o(1)) \ln^3 n} \sim \frac {n}{(k-\ell)\ell} \ge \frac {n}{k t_3} \ge \ln n
$$
rounds to move from phase $\ell$ to $\ell+1$ so, indeed, the length of this stage is asymptotic to what one expects, namely, it is equal to $(1+o(1)) \sum_{\ell=t_2}^{t_3} n/((k-\ell)\ell)$.

\medskip

\noindent \textbf{Stage $4$}: This stage lasts until the number of green agents is at least
$$
t_4 := \min \left\{ \frac {n \ln^2 n}{k}, \frac {k}{2} \right\}.
$$
As already mentioned, if $k \le \sqrt{2 n / \ln n}$, then the process ends before we reach this stage. In any case, the length of this stage is negligible.

At the beginning of this stage, when $\ell(k-\ell) \le n$, we argue as in Stage~1 that the process moves to another phase with probability $p_\ell > \ell(k-\ell)/(2n)$. On the other hand, when $\ell(k-\ell) > n$, the expected number of agents that become green in one round is a binomial random variable $Y_\ell$ with $\E[Y_\ell] \sim \ell(k-\ell) / n > 1$. It follows from Chernoff's bounds (applied with $\eps = 1/3$) that $Y_\ell \ge (2/3) \E[Y_\ell] > \ell(k-\ell)/(2n)$ with probability at least $1-\exp(-1/27) > 1/30$. We get that the expected number of rounds in this Stage is at most
\begin{eqnarray*}
\sum_{\ell = t_3}^{t_4} \frac {30}{\ell(k-\ell)/(2n)} &=& 60 \sum_{\ell = t_3}^{t_4} \frac {n}{\ell(k-\ell)} \sim \frac {60 n \ln(t_4/t_3)}{k} = O \left( \frac {n \ln \ln n}{k} \right) \\
&=& O \left( N \ \frac {\ln \ln n}{\ln n} \right) = O(N/\omega_0) = o(N).
\end{eqnarray*}
It is straightforward to see that a.a.s.\ it is $o(N)$, as promised.

\medskip

\noindent \textbf{Stage $5$}: This stage lasts until the number of green agents reaches
$
t_5 := k/2.
$
As mentioned earlier, if $k \le \sqrt{2 n \ln^2 n}$, then the process ends before we reach this stage. On the other hand, if this stage occurs, then its length is predictable. Since
$$
\E[Y_\ell] \sim \frac {\ell (k-\ell)}{n} \ge \frac {\ell k}{2n} \ge \frac {t_4 k}{2n} = \frac {\ln^2 n}{2},
$$
it follows from Chernoff's bound (applied with, say, $\eps =\ln^{-1/3} n = o(1)$) that
$$
Y_\ell \sim \E[Y_\ell] \sim \frac {\ell(k-\ell)}{n}
$$
with probability at least $1-\exp(-\Theta(\ln^{4/3} n)) \ge 1-1/n$. Hence, a.a.s.\ $Y_\ell \sim \E[Y_\ell]$ during the whole stage. Suppose then that this is the case and it remains to compute the length of this stage. It is important to point out that $Y_\ell = \Theta (\ell k / n) = o(\ell)$ as then
$$
\sum_{i=\ell}^{\ell+Y_{\ell}-1} \frac {n}{i(k-i)} \sim \sum_{i=\ell}^{\ell+Y_{\ell}-1} \frac {n}{\ell(k-\ell)} = Y_{\ell}\ \frac {n}{\ell(k-\ell)} \sim 1.
$$
It follows that the length of this stage is asymptotic to what one expects, namely, it is equal to $(1+o(1)) \sum_{\ell=t_4}^{t_5} n/((k-\ell)\ell)$.

\medskip

Putting everything together we get that the total number of rounds until $k/2$ agents become green is a.a.s.\
\begin{align*}
o(N) + (1+o(1)) & \sum_{\ell=t_2}^{t_3} \frac {n}{(k-\ell)\ell} + (1+o(1)) \sum_{\ell=t_4}^{t_5} \frac {n}{(k-\ell)\ell} \\
& = o(N) + (1+o(1)) \sum_{\ell=1}^{k/2} \frac {n}{(k-\ell)\ell} \\
& = o(N) + (1+o(1)) \frac {n \ln (k/2)}{k} \sim \frac {n \ln k}{k}.
\end{align*}
By symmetry, as explained above, going from there to the end of the process, phase $k$, it takes asymptotically the same amount of time, thus concluding the proof.
\end{proof}

\subsection{Concentration for Case (c) and a subrange of Case (d)}
Let us now move to the situation where $k=k(n)$ is at least linear in $n$ but at most $n \ln^2 n$. As before, we will distinguish a few stages. Stage~2 and Stage~4 last for a non-negligible amount of time whereas Stage~1 and Stage~3 finish quickly and are negligible.

\begin{theorem}\label{thm:Kn:very_dense}
Let $\eps > 0$ be an arbitrarily small constant. Suppose that $k=cn$, where $c=c(n)$ is such that $\eps \le c \le \ln^2 n$.
Then the following property holds a.a.s.
$$
\xi(K_n,k) =  \left(1 + O\left( \frac {1}{\sqrt{\ln \ln \ln n}} \right) \right) \left(\frac {1}{\ln (1+c)} + \frac {1}{c} \right) \ln n \sim \left(\frac {1}{\ln (1+c)} + \frac {1}{c} \right) \ln n.
$$
\end{theorem}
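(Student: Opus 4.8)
The plan is to follow the stage-based strategy used for Theorem~\ref{thm:Kn:dense}, but with one essential conceptual change that is responsible for the new shape of the answer. When $k=o(n)$ a single round multiplies the number of green agents by a factor $1+(k-\ell)/n=1+o(1)$, so the running time is a \emph{sum} $\sum_\ell n/((k-\ell)\ell)$ that is well approximated by an integral. Once $k=cn$ with $c\ge\eps$, each round changes the relevant population by a \emph{constant} factor, the integral heuristic breaks down, and one must iterate a genuinely multiplicative recursion. This is exactly why the answer changes from $2n\ln k/k$ to $(1/\ln(1+c)+1/c)\ln n$: the two summands are the lengths of two geometric stages whose per-round factors are $1+c$ and $e^{-c}$, and the denominators $\ln(1+c)$ and $c$ are nothing but $\ln(\text{growth factor})$ and $-\ln(\text{decay factor})$. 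I would stress that the naive continuous rates ($c$ and $1-e^{-c}$) give the \emph{wrong} denominators, and that getting $\ln(1+c)$ rather than $c$ in particular is the crux for large $c$.

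The first ingredient I would establish is an occupancy estimate extending Lemma~\ref{lem:distr_of_agents}: since on $K_n$ every agent occupies an essentially uniform vertex after one step, $\ell$ green agents cover $G(\ell)\sim n(1-e^{-\ell/n})$ vertices, with concentration. Hence, if there are $\ell$ green agents, the number that turn green in one round is $Y_\ell\in\Bin(k-\ell,(1+o(1))G(\ell)/n)$, so writing $w_t=k-\ell_t$ for the number of white agents we have $\E[\ell_{t+1}\mid\ell_t]\sim\ell_t+(k-\ell_t)(1-e^{-\ell_t/n})$, equivalently $\E[w_{t+1}\mid w_t]\sim w_t\,e^{-\ell_t/n}$. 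As in the earlier proofs I would first check that a.a.s.\ the process effectively starts at phase $1$ (more precisely $1+O(c)$, which only shifts the count by $O(1)$ rounds).

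The time then splits into four stages, of which two are negligible. \textbf{Growth.} For $\ell\ll n$ we have $G(\ell)\sim\ell$ and $k-\ell\sim k$, so $\E[Y_\ell]\sim c\ell$ and $\ell_{t+1}\sim(1+c)\ell_t$; a short ignition stage brings $\ell$ to some $t_1=n^{o(1)}$ where $\E[Y_\ell]=c\ell$ is large enough for Chernoff to give $Y_\ell=c\ell(1\pm o(1))$ with failure probability $o(1/n)$, and then iterating $\ln\ell_{t+1}=\ln\ell_t+\ln(1+c)(1+o(1))$ up to $\ell=n/\omega$ (with $\omega\to\infty$ slowly) costs $(1+o(1))\ln n/\ln(1+c)$ rounds. \textbf{Decay.} Once $\ell$ is a constant fraction of $n$, $G(\ell)/n\sim1-e^{-c}$, each surviving white agent stays white with probability $\sim e^{-c}$, so $\ln w$ drops by $c(1+o(1))$ per round and clearing $w$ from $\Theta(n)$ down to $O(1)$ takes $(1+o(1))\ln n/c$ rounds, the last $O(\ln n)$ white agents being swept up in a further $o(\ln n/c)$ rounds via a max-of-geometrics estimate. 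The \textbf{bulk} (both $\ell$ and $k-\ell$ of order $n$) and the ignition are crossed in $o(\ln n)$ rounds because there the per-round factor is bounded away from $1$. Throughout, concentration is supplied by Chernoff's bound~(\ref{eq:chern}) together with a union bound over the $O(\ln n)$ rounds.

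The main obstacle, relative to Theorem~\ref{thm:Kn:dense}, is that the $\ell\leftrightarrow k-\ell$ symmetry is gone: growth and decay now run at genuinely different rates and must be analysed independently, and the occupancy profile $1-e^{-\ell/n}$ interpolates nontrivially between them. The delicate point is bookkeeping the lower-order stages uniformly over the whole range $\eps\le c\le\ln^2 n$---for large $c$ the term $\ln n/c$ is itself negligible while the growth factor and the occupancy corrections become subtle, whereas for $c=\Theta(1)$ both terms are $\Theta(\ln n)$ and the transition must be shown not to enter at first order. Balancing the occupancy error (which behaves like $1/\ln\omega$ in the manner of Lemma~\ref{lem:distr_of_agents}) against the cost of crossing the bulk and the ignition/cleanup layers is, I expect, what forces the slowly vanishing relative error $O(1/\sqrt{\ln\ln\ln n})$ rather than a cleaner bound.
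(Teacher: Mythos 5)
Your stage decomposition and the two mechanisms you identify (multiplicative growth of the green population by a factor $1+c$ per round, then per-round survival probability $e^{-c}$ for the remaining white agents) are exactly those of the paper's proof, which runs Stage~1 (ignition up to $\ln n$ greens, negligible), Stage~2 (growth up to $n/\ln\ln n$ greens, costing $\sim \ln n/\ln(1+c)$ rounds), Stage~3 (a negligible $O((\ln\ln n)^2)$ rounds driving the white count down to $n/\ln n$), and Stage~4 (decay, costing $\sim \ln n/c$ rounds). Your diagnosis of where the relative error $O(1/\sqrt{\ln\ln\ln n})$ comes from --- balancing the per-round occupancy error of Lemma~\ref{lem:distr_of_agents} against the length of the growth stage --- is also correct.

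The one genuine gap is your occupancy estimate $G(\ell)\sim n(1-e^{-\ell/n})$ asserted ``with concentration'' in the regime $\ell=\Theta(k)=\Theta(cn)$, which is precisely where the decay rate $e^{-c}$ must be extracted. Lemma~\ref{lem:distr_of_agents} only covers $\ell\le n/\omega_0$, and Chernoff's bound~(\ref{eq:chern}) does not apply to the number of unoccupied vertices because the indicators $I(E_v)$ are dependent; the paper has to compute the covariances $\Cov(I(E_v),I(E_w))=O(\ell/n^2)e^{-2c}$ and run a second-moment (Chebyshev) argument to show that $X=\sum_v I(E_v)$ concentrates around $n\exp(-c(1+O(1/\ln n)))$. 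Moreover this concentration genuinely fails at the top of your range: for $c>\ln n$ one has $ne^{-c}=o(1)$ and there is nothing to concentrate, which is why the paper splits Stage~4 into the cases $c\le\sqrt{\ln n}$ (second moment, two-sided bounds $T_\pm$) and $c>\sqrt{\ln n}$ (the decay stage is negligible there, so a one-sided first-moment bound with the cruder survival rate $e^{-\sqrt{\ln n}}$ suffices); your sketch needs this case split or some substitute for it. A second, smaller imprecision: you start the decay ``once $\ell$ is a constant fraction of $n$,'' but with $k-\ell=\Theta(n)$ the survival probability is $e^{-c+\Theta(1)}$, a constant factor off from $e^{-c}$, which for $c=\Theta(1)$ perturbs the per-round rate at first order; you must either first drive the white count down to $o(n)$ (the paper's Stage~3 takes it to $n/\ln n$) or argue separately that the accumulated correction $\sum_t w_t/n=O(1)$ costs only $O(1)$ extra rounds. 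Neither repair is difficult, but both are needed for the stated error term.
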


Note that if $c = \Theta(1)$, then a.a.s.\ $\xi(K_n,k) = \Theta(\ln n)$ and both terms are of the same order. On the other hand, if $c = c(n) \to \infty$, then the second term is negligible compared to the first term and so a.a.s.\ $\xi(K_n,k) \sim \ln n / \ln c = o(\ln n)$.

\begin{proof}
Our goal is to show that a.a.s.\ it takes $\Theta(\ln n / \ln (1+c)) = \Omega(N)$ rounds to finish the process, where $N:=\ln n / \ln \ln n$. So stages that take $o(N)$ rounds to finish are negligible.

\medskip

\noindent \textbf{Stage 1}: This stage covers every round until the number of green agents is at least $t_1 : = \ln n$, that is, when we reach phase $t_1$. Arguing as in Theorem~\ref{thm:Kn:dense}, one can show that this stage a.a.s.\ takes $O(\ln \ln n) = O(N \cdot (\ln \ln n)^2 / \ln n) = o (N)$ rounds and so is negligible.
In fact, if (for example) $c \ge 1.1 \ln n$, then a.a.s.\ at least $\ln n$ agents start at the same vertex as the initial green agent and so this stage does not actually happen (that is, the required bound holds at round 0).
We omit the details.

\medskip

\noindent \textbf{Stage 2}: This stage covers every round before the number of green agents is at least $t_2 : = n / \ln \ln n$. Consider the beginning of some round at phase $\ell$, where $t_1 \le \ell < t_2$. As usual, we first move green agents and then white ones. Applying Lemma~\ref{lem:distr_of_agents} (with $\omega_0 = \omega_0(n) = \ln \ln n$) we may assume that when white agents make their move there are $\ell (1 + O(1/\ln \ln \ln n)) \sim \ell$ vertices occupied by green agents. As a result, the number of white agents that become green during this round can be modelled by the random variable $X \in \Bin (k-\ell, (1 + O(1/\ln \ln \ln n)) \ell / n)$ with the expectation equal to
$$
(k - \ell) \left( 1 + O \left( \frac{1}{\ln \ln \ln n} \right) \right) \frac {\ell}{n} = \frac {k \ell}{n} \left( 1 + O \left( \frac{1}{\ln \ln \ln n} \right) \right) = c \ell \left( 1 + O \left( \frac{1}{\ln \ln \ln n} \right) \right).
$$
It follows from Chernoff bound~(\ref{eq:chern}) (applied with $\eps=1/\ln \ln \ln n$) that
$$
X = c \ell (1 + O(1/\ln \ln \ln n))
$$
with probability at least
$$
1 - 2 \exp\left( \Omega \left( \eps^2 \E [X] \right) \right) = 1 - 2 \exp\left( \Omega \left( \frac {\ln n}{(\ln \ln \ln n)^2} \right) \right) \ge 1 - \frac {1}{\ln^2 n}.
$$
If this property holds, then we say that a given round is good. During each good round, the number of green agents increases from $\ell$ to
$$
\ell + c \ell (1 + O(1/\ln \ln \ln n)) = \ell (1+c) (1 + O(1/\ln \ln \ln n)).
$$

We will show that a.a.s.\ this stage takes $T$ rounds such that $T_- \le T \le T_+$, where
$$
T_{\pm} = \left( \log_{1+c} n \right) \left( 1 \pm \frac {1}{\sqrt{\ln \ln \ln n}} \right) = \frac {\ln n}{\ln (1+c)} \left( 1 \pm \frac {1}{\sqrt{\ln \ln \ln n}} \right) \sim \frac {\ln n}{\ln (1+c)}.
$$
Since the expected number of rounds that are not good is at most $T_+ / \ln^2 n = o(1)$, a.a.s.\ all rounds are good. But this implies that a.a.s.\ at the end of round $T_-$ the number of green agents is equal to
\begin{eqnarray*}
(\ln n) \left( (1+c) \left(1 + O\left( \frac {1}{\ln \ln \ln n} \right) \right) \right)^{T_-} &=& (1+c)^{T_-} \exp \left( \ln \ln n + O\left( \frac {T_-}{\ln \ln \ln n} \right) \right) \\
&=& n^{1-1/ \sqrt{\ln \ln \ln n}} \exp \left( O\left( \frac {\ln n}{\ln \ln \ln n} \right) \right) \\
&=& n \exp \left( - \frac {\ln n}{\sqrt{\ln \ln \ln n}}  + O\left( \frac {\ln n}{\ln \ln \ln n} \right) \right)  \\
&=& n \exp \left( - \frac {\ln n}{\sqrt{\ln \ln \ln n}} \ (1+o(1)) \right) \\
&<& \frac {n}{\ln \ln n} = t_2.
\end{eqnarray*}
So, indeed, this stage is not finished in less than $T_-$ rounds a.a.s. Similar calculations show that
$$
(\ln n) \left( (1+c) \left(1 + O\left( \frac {1}{\ln \ln \ln n} \right) \right) \right)^{T_+} = n \exp \left( \frac {\ln n}{\sqrt{\ln \ln \ln n}} \ (1+o(1)) \right) > n > t_2,
$$
and so a.a.s.\ this stage finishes in less than $T_+$ rounds.

\medskip

\noindent \textbf{Stage 3}: This stage covers every round before the number of green agents is at least $t_3 : = k - n / \ln n$. It will be easier to monitor the number of white agents. At the beginning of this stage, the number of white agents is at most $k - t_2 \le k \le n \ln^2 n$, and at the end of this stage it should be at most $n / \ln n$. Applying Lemma~\ref{lem:distr_of_agents} for agents that are green at the beginning of this stage (that is, with $s = s(n) = t_2 = n / \ln \ln n$ and $\omega_0 = \omega_0(n) = \ln \ln n$) we may assume that each time white agents move there are at least $s (1 + O(1/\ln \ln \ln n)) \sim s$ vertices occupied by green agents. This will be enough to show that the length of this stage is negligible. In other words, this part of the process is short even if the only way to become green is to meet an agent that is already green at the beginning of this stage.

\medskip

Suppose that at the beginning of some round, there are $w$ white agents, where $w > n / \ln n$. The number of white agents that become green at the end of this round can be stochastically lower bounded by random variable $X \in \Bin(w, (1+o(1)) / \ln \ln n)$ with
$$
\E[X] \sim \frac {w}{\ln \ln n} > \frac {n}{(\ln n) (\ln \ln n)}.
$$
It follows from Chernoff's bound~(\ref{eq:chern}) that $X > (2/3) \E[X] > w / (2 \ln \ln n)$ with probability at least $1 - 2\exp(-\Omega(\E[X])) \ge 1 - 1/n$. We will show that this phase will end in less than $T = 6 (\ln \ln n)^2 = O(N (\ln \ln n)^3 / \ln n) = o(N)$ rounds and so its length is negligible. Indeed, a.a.s.\ during these (at most) $T$ rounds, the number of white agents decreases each time by a multiplicative factor of at least $1-1/(2 \ln \ln n)$. It follows that the number of white agents after $t \le T$ rounds is at most
$$
(k-t_2) \left( 1 - \frac {1}{2 \ln \ln n} \right)^t \le (n \ln^2 n) \exp \left( - \frac {t}{2 \ln \ln n} \right).
$$
Since $(n \ln^2 n) \exp ( -T / (2 \ln \ln n) ) = (n \ln^2 n) \exp ( - 3 \ln \ln n ) = n / \ln n$, a.a.s.\ this phase has to finish in at most $T=o(N)$ rounds.

\medskip

\noindent \textbf{Stage 4}: We will continue the process till the very end. During this last stage, the number of green agents is equal to
$$
k - O(n/\ln n) = k(1-O(1/\ln n)) = cn (1-O(1/\ln n)) \sim cn.
$$
Our first task is to estimate the number of vertices occupied by them. It is a straightforward application of Chernoff's bound to get that a.a.s.\ no vertex will be occupied by, say, $O(\ln^2 n)$ agents during any round of that stage. Let us now concentrate on any vertex $v$. At the beginning of some round at phase $\ell$, there are $\ell_v = O(\ln^2 n)$ green agents occupying vertex $v$. Let $E_v$ be the event that no green agents moves to $v$. Clearly,
\begin{eqnarray*}
\Prob(E_v) &=& \left( 1 - \frac {1}{n-1} \right)^{\ell-\ell_v} = \left( 1 - \frac {1}{n-1} \right)^{cn (1+O(1/\ln n))-O(\ln^2 n)} \\
&=& \left( 1 - \frac {1}{n-1} \right)^{cn (1+O(1/\ln n))} = \exp \Big( - c (1+O(1/\ln n)) \Big).
\end{eqnarray*}

We will independently consider two cases. For small $c$ (Case~1), this stage has significant length and so we need to treat it carefully. If $c$ is large (Case~2), then its length is negligible and so some rough bound can be applied.

\medskip

\noindent \textbf{Case 1}: $c \le \sqrt{\ln n}$. Let $X = \sum_v I(E_v)$ be the number of vertices not occupied by any green agent ($I(E_v)$ is the indicator random variable for event $E_v$). It follows that
$$
\E[X] = n \exp( - c (1+O(1/\ln n))) \sim n e^{-c} = n^{1-o(1)}.
$$
Unfortunately, the events $E_v, E_w$ associated with vertices $v, w$ are not independent and so Chernoff's bound cannot be applied. However, they are almost independent and so it is straightforward to apply the second moment method to show the desired concentration. Indeed, for any pair of vertices $v,w$,
\begin{eqnarray*}
\Cov(I(E_v), I(E_w)) &=&  \Prob(E_v \cap E_w) - \Prob(E_v) \Prob(E_w) \\
&=& \left( 1 - \frac {2}{n-1} \right)^{\ell-\ell_v-\ell_w} \left( 1 - \frac {1}{n-1} \right)^{\ell_v} \left( 1 - \frac {1}{n-1} \right)^{\ell_w}\\
&& - \left( 1 - \frac {1}{n-1} \right)^{\ell-\ell_v} \left( 1 - \frac {1}{n-1} \right)^{\ell-\ell_w} \\
&=& \exp \left( - \frac {2\ell - \ell_v - \ell_w}{n-1} + O(\ell/n^2) \right) \\
&& - \exp \left( - \frac {2\ell - \ell_v - \ell_w}{n-1} + O(\ell/n^2) \right) \\
&=& \exp \left( - \frac {2\ell - \ell_v - \ell_w}{n-1} \right) \left( (1 + O(\ell/n^2))- (1 + O(\ell/n^2)) \right) \\
&=& O(\ell/n^2) \ e^{-2c}.
\end{eqnarray*}
We get that
\begin{eqnarray*}
\Var[X] &=& O(n^2) \cdot O(\ell/n^2) \ e^{-2c} = O(\ell/n^2) (\E[X])^2 \\
&=& O \left( \sqrt{\ln n}/n \right) (\E[X])^2 = o( (\E[X])^2 ).
\end{eqnarray*}
It follows from Chebyshev's inequality~(\ref{eq:cheb}) (applied with, say, $\eps=n^{-1/3}$) that with probability at least $1-1/\ln^2 n$, the number of vertices not occupied by any green agent is equal to $ n \exp( - c (1+O(1/\ln n)))$. We may then assume that this is the case during this stage of the process. It follows that each time a white agent moves, she stays white with probability $\exp( - c (1+O(1/\ln n)))$.

Let
$$
T_{\pm} = \frac {\ln n}{c} \left( 1 \pm \frac {2 \ln \ln n}{\ln n} \right) \sim \frac {\ln n}{c}.
$$
The probability that an agent that is white at the beginning of this stage stays white during $T_+$ rounds is equal to
$$
\exp \left( - c \left(1+O \left( \frac {1}{\ln n} \right) \right) \right)^{T_+} = \exp \left( - \ln n - 2 \ln \ln n + O(1) \right) = \Theta \left( \frac {1}{n \ln^2 n} \right).
$$
We get that the expected number of white agents at time $T_+$ is $o(1)$ and so a.a.s.\ we are done in at most $T_+$ rounds. On the other hand, the expected number of white agents at time $T_-$ is
$$
\frac {n} {\ln n} \exp \left( - c \left(1+O \left( \frac {1}{\ln n} \right) \right) \right)^{T_-} = \frac {n}{\ln n} \cdot \Theta \left( \frac {\ln^2 n}{n} \right) = \Theta( \ln n ) \to \infty.
$$
Chernoff's bound implies that a.a.s.\ this stage takes at least $T_-$ rounds, and the claimed bound holds.

\medskip

\noindent \textbf{Case 2}: $\sqrt{\ln n} := c_0 < c \le \ln^2 n$. Arguing as before, we may assume that each time a white agent moves, she stays white with probability at most $\exp( - c_0 (1+O(1/\ln n))) = \exp( -\sqrt{\ln n} + O(1/\sqrt{ \ln n}))$. The probability that there is at least one white agent left after $T = \sqrt{\ln n} = o(\ln n / \ln \ln n)$ rounds can be upper bounded as follows:
$$
\frac{n}{\ln n} \exp( -\sqrt{\ln n} + O(1/\sqrt{ \ln n}))^T = \frac {n}{\ln n} \cdot O \left( \frac {1}{n} \right) = o(1).
$$
We get that a.a.s.\ the length of this stage is negligible and the claimed bound holds.
\end{proof}

\subsection{Concentration for the remaining subrange of Case (d) and Cases (e)-(g)}

The situation when $k > n \ln^2 n$ is relatively easy to investigate. We will first deal with the case when $k = n^{o(1)}$. Since the proof is very similar (but much easier) to the one of Theorem~\ref{thm:Kn:very_dense} we provide only a sketch. After that it will be straightforward to finalize the remaining cases, Cases (e)-(g).

\begin{theorem}\label{thm:Kn:very_dense2}
Suppose that $k=cn$, where $c=c(n)$ is such that $\ln^2 n \le c = n^{o(1)}$.
Then the following property holds a.a.s.
$$
\xi(K_n,k) =  \left(1 + O\left( \frac {1}{\sqrt{\ln \ln \ln n}} \right) \right) \frac {\ln n}{\ln c} + O(1) \sim \frac {\ln n}{\ln c}.
$$
\end{theorem}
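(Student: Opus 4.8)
The plan is to reuse the architecture of the proof of Theorem~\ref{thm:Kn:very_dense} almost verbatim, while taking advantage of the fact that here $c \ge \ln^2 n$ is so large that everything except the multiplicative-growth stage degenerates to something trivial. I would split the process into three parts: an initial part that is present for free, a single growth stage that produces the whole leading term, and a mop-up that costs only $O(1)$ rounds. First I would dispose of the beginning: the number of white agents sharing the initial green vertex is $\Bin(k-1,1/n)$ with mean $(k-1)/n \sim c \ge \ln^2 n$, so Chernoff's bound~(\ref{eq:chern}) shows that a.a.s.\ the process starts with $\ell_0 \sim c$ green agents, with $c/2 \le \ell_0 \le 2c$. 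In particular $\ell_0 \ge \ln n$ already, so there is no analogue of Stage~1 to carry out; we may begin the growth stage immediately at phase $\ell_0 \sim c$, and the two-sided bound on $\ell_0$ is all the upper and lower estimates will use.

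The heart of the matter is the multiplicative-growth stage, which I would run exactly as Stage~2 of Theorem~\ref{thm:Kn:very_dense}, stopping when the number of green agents first reaches $t_2 := n/\ln\ln n$. Applying Lemma~\ref{lem:distr_of_agents} with $\omega_0 = \ln\ln n$, during any round at phase $\ell \le t_2$ the green agents occupy $\ell(1 + O(1/\ln\ln\ln n)) \sim \ell$ distinct vertices, so the number of white agents turning green is $\Bin(k-\ell,(1+O(1/\ln\ln\ln n))\,\ell/n)$ with expectation $\sim c\ell \ge c \ge \ln^2 n$. Chernoff's bound~(\ref{eq:chern}) with $\eps = 1/\ln\ln\ln n$ then gives that a.a.s.\ every such round is ``good'', each multiplying the green count by $(1+c)(1 + O(1/\ln\ln\ln n))$. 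Tracking this product from $\ell_0 \sim c$ up to $t_2 \sim n$, the stage lasts
$$
\frac{\ln(t_2/\ell_0)}{\ln(1+c)}\left(1 + O\left(\frac{1}{\sqrt{\ln\ln\ln n}}\right)\right) = \frac{\ln n}{\ln c}\left(1 + O\left(\frac{1}{\sqrt{\ln\ln\ln n}}\right)\right) + O(1)
$$
rounds, where I used $\ln(1+c) \sim \ln c$ and $\ln \ell_0 = \ln c + o(1) = o(\ln n)$. The accumulated per-round factor $(1+O(1/\ln\ln\ln n))^{T}$ over $T \sim \ln n/\ln c$ good rounds is exactly what produces the stated relative error $O(1/\sqrt{\ln\ln\ln n})$, precisely as in the earlier theorem. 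This stage contributes the entire leading term $\ln n / \ln c$.

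Finally I would show the mop-up costs only $O(1)$ rounds, which is where the present range is strictly easier than Theorem~\ref{thm:Kn:very_dense}. Once $\ell \ge t_2 = n/\ln\ln n$, each white agent turns green in one further round with probability $\gtrsim \ell/n \ge 1/\ln\ln n$; since there are $\sim cn$ white agents, Chernoff's bound~(\ref{eq:chern}) shows that a.a.s.\ the green count jumps to at least $cn/(2\ln\ln n) \ge n\ln^2 n/(2\ln\ln n)$. With that many green agents, the expected number of vertices avoided by all of them is at most $n(1-1/n)^{\,n\ln^2 n/(2\ln\ln n)} \le n\exp(-\ln^2 n/(2\ln\ln n)) = o(1)$, so by Markov's inequality a.a.s.\ every vertex carries a green agent; consequently every surviving white agent turns green in the very next round. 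Hence the mop-up terminates in $O(1)$ rounds and contributes only to the additive $O(1)$ term, giving $\xi(K_n,k) = (1+O(1/\sqrt{\ln\ln\ln n}))\ln n/\ln c + O(1) \sim \ln n/\ln c$.

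I do not expect a genuinely new obstacle: the growth stage is essentially copied from Stage~2 of Theorem~\ref{thm:Kn:very_dense}, and the only thing to check afresh is that the tail of the process is negligible. The one point needing a little care is the transition of $\ell$ from $o(n)$ to $\Theta(n)$, where the green agents pass from occupying $\sim \ell$ distinct vertices to saturating the whole vertex set. Because $c \ge \ln^2 n$ is so large, however, this transition and the subsequent saturation both occur within a constant number of rounds, so the crude Markov estimate above suffices and the delicate second-moment computation of Case~1 of Theorem~\ref{thm:Kn:very_dense} is not required.
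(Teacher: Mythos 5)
Your proposal is correct and follows essentially the same route as the paper's own (sketched) proof: Chernoff at round $0$ to start at phase $\sim c$, the Stage-2-style multiplicative growth of Theorem~\ref{thm:Kn:very_dense} up to $n/\ln\ln n$ green agents giving the leading term $\ln n/\ln c$, and then two further rounds to saturate all vertices and finish. The only cosmetic difference is that the paper, before estimating the probability that a vertex is avoided by all green agents, first notes via Chernoff that a.a.s.\ no single vertex holds more than $O(c)$ agents (so that the exponent in $(1-\frac{1}{n-1})^{g-g_v}$ is indeed $\sim g$), a one-line check you should insert into your mop-up step.
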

\begin{proof}[Sketch of the proof]%of Theorem~\ref{thm:Kn:very_dense2}
It follows immediately from Chernoff's bound that $(1+o(1)) c$ agents become green in round 0. Suppose that at the beginning of some round there are $\ell$ green agents, $(1+o(1)) \ln^2 n \le (1+o(1)) c \le \ell \le n / \ln \ln n$. By Lemma~\ref{lem:distr_of_agents}, we may assume that once they move, $\ell(1+O(1/\ln \ln \ln n))$ vertices are occupied by at least one green agent. Now, it is time for white agents to move. Arguing as in the proof of Theorem~\ref{thm:Kn:very_dense}, we may assume that at the end of this round, there are $\ell c (1+O(1/\ln \ln \ln n))$ green agents. After $(1+O(1/\sqrt{\ln \ln \ln n})) \ln n / \ln c$ rounds, the number of green agents is at least $n / \ln \ln n$.

The process will be over in at most two more rounds a.a.s. Indeed, by Lemma~\ref{lem:distr_of_agents}, we may assume that once green agents move there will be at least $(1+o(1)) n / \ln \ln n$ vertices occupied by at least one green agent. By Chernoff's bound, after white agents move there will be at least $(1+o(1)) c n / \ln \ln n$ green agents a.a.s. Moreover, after applying Chernoff's bound one more time, we get that a.a.s.\ no vertex is occupied by more than, say, $2c$ agents.

Let us consider any vertex $v$. The probability that no green agent arrives at this vertex is equal to
\begin{eqnarray*}
\left( 1 - \frac {1}{n-1} \right)^{ (1+o(1)) c n / \ln \ln n - O(c) } &\le& \exp \left( - (1+o(1)) \frac {c}{ \ln \ln n} \right) \\
&\le& \exp \left( - (1+o(1)) \frac {\ln^2 n}{ \ln \ln n} \right) = o \left( \frac {1}{n} \right).
\end{eqnarray*}
Hence, by the union bound, a.a.s.\ all vertices are occupied by at least one green agent and so the process is over once white agents move. The claimed bound holds and the proof is finished.
\end{proof}

\medskip

Let us point out that in the previous theorem, it is assumed that $c = n^{o(1)}$ and so $\ln c = o(\ln n)$. As a result, $\xi(K_n,k) \to \infty$ as $n \to \infty$.
If $k = n^{1+x+o(1)}$ for some $x > 0$ (Cases~(e)--(g)), then $\xi(K_n,k)$ does not tend to infinity anymore.

\medskip

Suppose first that $1/i < x < 1/(i-1)$ for some $i \in \Nn \setminus \{1,2\}$ (Case~(e)). The following properties hold a.a.s. The number of green agents at the end of round 0 is equal to $(1+o(1))c$, and then each round it keeps growing by a multiplicative factor of $(1+o(1))c$. It reaches
$$
(1+o(1)) c^{i} = n^{ix+o(1)} = n^{1 + i(x-1/i) +o(1)} \gg n \ln n
$$
at the end of round $i-1$. Arguing as before, at the beginning of round $i$ all vertices are occupied by at least one green and the process is over. It follows that a.a.s.\ $\xi(K_n,k) = i$.

\medskip

Suppose now that $x > 1/2$ (Case~(f)). Regardless of how large $x$ is, once $(1+o(1))c$ green agents move from vertex $v$ at the beginning of round 1, a.a.s.\ $(1+o(1))c$ white agents will move to $v$. Clearly, they will stay white at the end of round 1 since no green agent occupies $v$ at that point (deterministically). A.a.s.\ the process will end at round~2, and so $\xi(K_n,k) = 2$.

\medskip

Finally, suppose that $x = 1/i$ for some $i \in \Nn \setminus \{1\}$ (Case~(g)). A.a.s.\ at the end of round $i-1$, there are $(1+o(1)) c^i = (1+o(1)) (k/n)^i$ green agents. Our goal is to investigate random variable $X$, the number of vertices not occupied by any green agent at the beginning of round $i$. If $c^i = (k/n)^i < (1-\eps) n \ln n$, then
\begin{eqnarray*}
\E [X] &=& n \left( 1 - \frac {1}{n-1} \right)^{ (1+o(1)) c^i - O(c) } = n \exp \left( - \frac {(1+o(1)) c^i}{n} \right) \\
&\ge& n \exp \left( - (1+o(1)) (1-\eps) \ln n \right) = n^{\eps + o(1)} \to \infty,
\end{eqnarray*}
as $n \to \infty$. It is straightforward to see that a.a.s.\ $X>0$ and so a.a.s.\ the process needs one more round to finish. It follows that a.a.s.\ $\xi(K_n,k) = i+1$. On the other hand, if $c^i = (k/n)^i > (1+\eps) n \ln n$, then $\E[X] \le n^{-\eps+o(1)} \to 0$, as $n \to \infty$.  It follows that a.a.s.\ $X=0$ and so $\xi(K_n,k) = i$. One can obtain more precise results for the critical value when $c^i = (k/n)^i \sim n \ln n$ but we do not do so and only claim that a.a.s.\ $\xi(K_n,k) \in \{i, i+1\}$.

\subsection*{Acknowledgments}
This research was funded, in part, through a generous contribution from NXM Labs Inc.  NXM's autonomous security technology enables devices, including connected vehicles, to communicate securely with each other and their surroundings without human intervention while leveraging data at the edge to provide business intelligence and insights. NXM ensures data privacy and integrity by using a novel blockchain-based architecture which enables rapid and regulatory-compliant data monetization.

\newpage
\appendix

\section{Appendix: preliminary results of an empirical study}\label{sec:appendix}

\subsection{Model overview}

We are considering efficiency of message broadcasting in a dynamic transportation network.
We are representing the transportation network as an undirected planar graph $G=(V,E)$ having $n=|V|$ vertices and $m=|E|$ edges. At each point of time, there are $k \ge 2$ agents occupying the graph.

\begin{figure}
    \centering
    \includegraphics[width=0.8\textwidth]{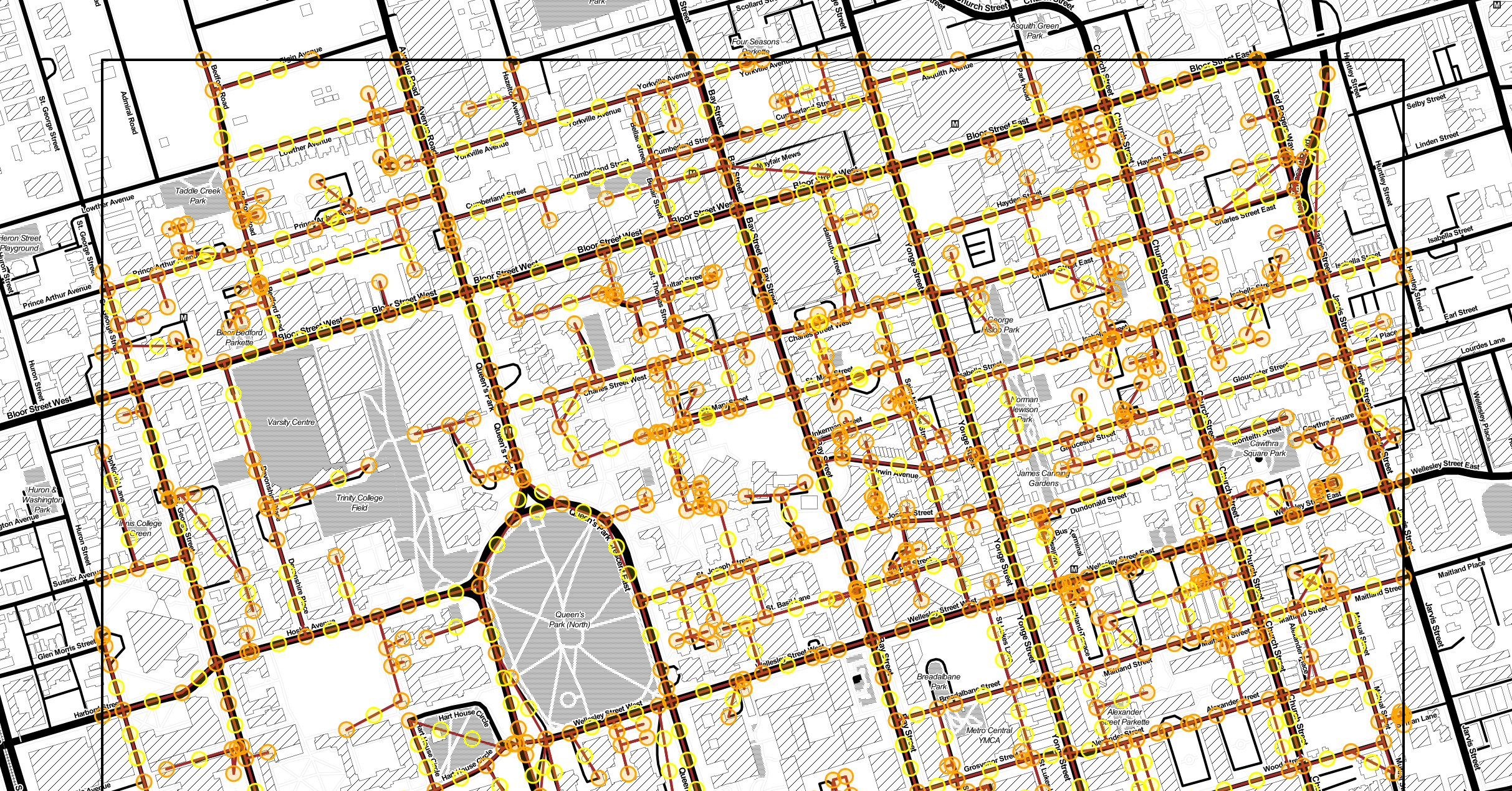}
    \caption{Graph discretization of central Toronto area with a maximum distance between graph vertices amounting to 50 meters. The vertices of the graph have two possible colours: orange and yellow. Orange vertices represent intersections from the Open Street Map data. Yellow vertices have been added during the discretization process to keep the maximum edge length below 50 meters. The edges in the graph have been presented with brown colour. The area of interest has been marked with black rectangle.  }
    \label{fig:torontonetwork}
\end{figure}

\begin{figure}
    \centering
    \includegraphics[width=0.8\textwidth]{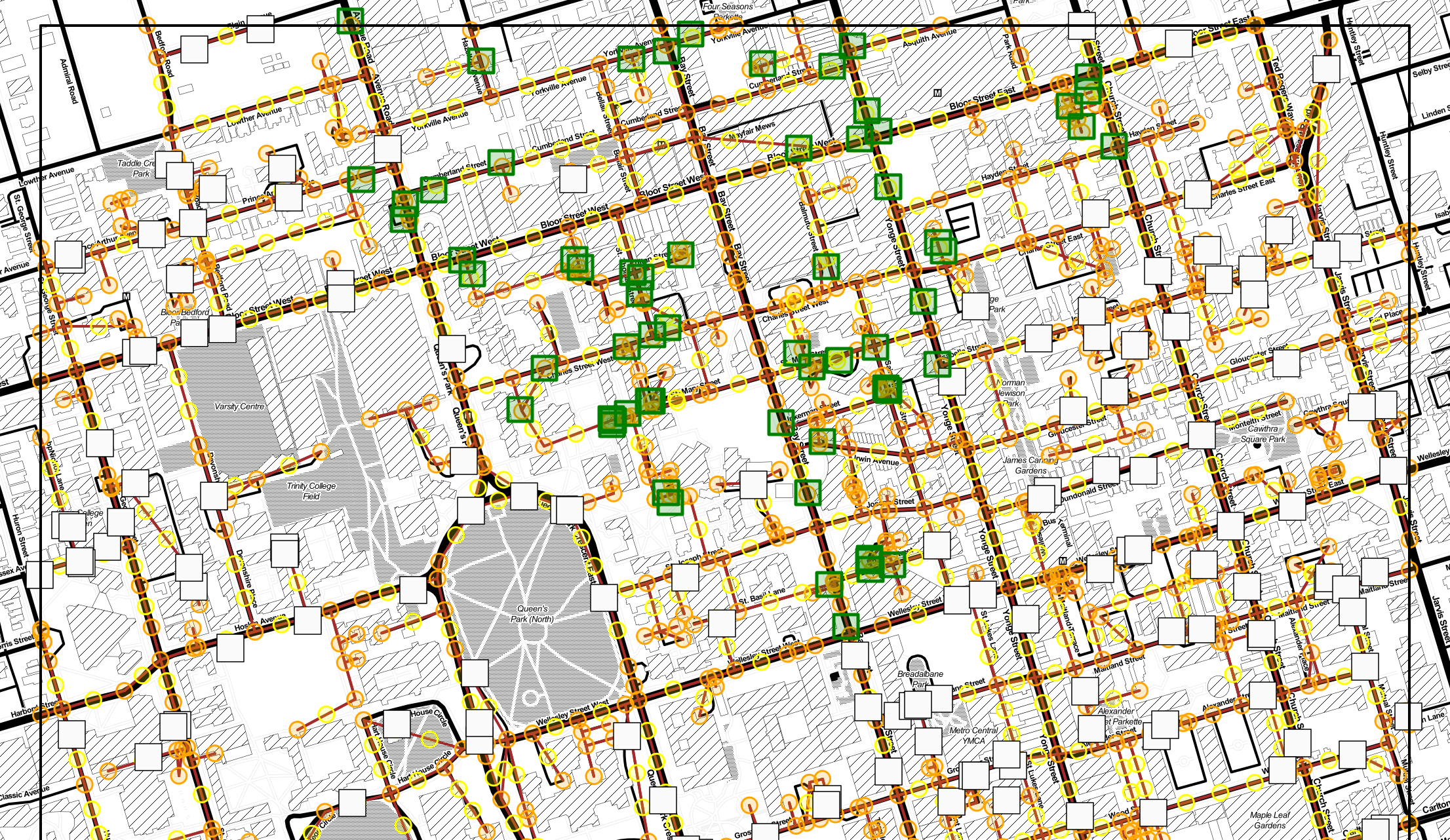}
    \caption{A sample simulation state after 100 steps. The agents represented by green rectangles have received the message while the white agents have not yet received it. Note that the picture presents only the top part of the simulated area. }
    \label{fig:torontoagents}
\end{figure}

We partition agents into two groups representing the following two possible states:
\begin{itemize}
    \item \emph{white agents}---have not received the message
    \item \emph{green agents}---have received the message.
\end{itemize}
At each step of the process, the agents change their location on the graph by moving to an adjacent vertex.
Independently of the colour of agents, the graph is also coloured and has two types of vertices:
\begin{itemize}
\item \emph{orange vertices}---where agents can randomly change their direction of travel
\item \emph{yellow vertices}---when going through such vertices, agents continue their travel in the same direction.
\end{itemize}
The degree of a yellow vertex is always $2$ while the degree of the orange vertex can be any natural number.

The model starts with all agents randomly located on vertices of the graph and one randomly selected agent broadcasting the message. If there are other agents occupying the same vertex the starting agent is located on, the message is passed to them.
At each step each agent moves to an adjacent vertex. More than one agent can occupy a vertex.
When agents move from an orange vertex they randomly select the next adjacent vertex with equal probability for each neighbour. On the other hand, when agents move from a yellow vertex they can only move to a vertex that they have not visited one step before (that is, they continue in the same direction).

There are two scenarios for passing messages in the model:
\begin{itemize}
    \item \emph{Jump-over = No}---agents pass the message only when they meet at the end of the step on the same vertex.
    \item \emph{Jump-over = Yes}---agents pass the message either when they meet at the end of the step on the same vertex or when during the step an agent having a message passes over an another agent.
\end{itemize}
Note that, if at the end of a step more than one agent occupies a vertex and one of agents has the message, then all agents that interacted will receive the message.

Our goal is to estimate how the number of agents, size, and the structure of the network determine the time required to broadcast the message to all agents presented in the model.

\subsection{Data for simulation: the central Toronto area}

We start by simulating the model of a real-world network, central Toronto. We start with a graph taken from the data from the OpenStreetMap project and represent each intersection in the road system as an orange vertex. Additionally, we convert the transportation network to an undirected graph. However, since in the presented model an agent moves to an adjacent vertex at each step and there is huge variance in intersection distances, we discretize the graph to make the model more realistic.
Namely, for some discretization parameter $d>0$, we divide each edge of length greater than $d$ into sub-edges. We achieve that by adding an appropriate, minimum number of yellow edges such that no edge in the resulting graph is longer than $d$.
Since yellow vertices have been added in the graph discretization process, each yellow vertex has exactly 2 neighbours. A sample graph after discretization has been presented in Figure~\ref{fig:torontonetwork}. The edges (marked with brown colour) are weighted; the weights represent distance between vertices in meters and are being used to control the parameter $d$.

Note that in Figure~\ref{fig:torontonetwork}, the \emph{orange vertices} come from the OpenStreetMap data and represent physical intersections in the city, while the \emph{yellow vertices} represent vertices that have been added in a map discretization process. In this figure $d=50$ (meters) value was used but other discretization parameters (25m and 75m) were also tested on the sane map (see Table~\ref{tab:discretization} for the list of considered discretizations and corresponding graph vertex and edge counts).  In any case, the yellow vertices were added in such a way that no edge in the graph was longer than the given parameter.

\begin{table}
\footnotesize
    \centering
    \begin{tabular}{lcc}
        Disretization level & vertices ($n$) & edges ($m$) \\
        \hline
        $25$ meters & 4251 & 4596\\
        $50$ meters & 2451 & 2796\\
        $75$ meters & 1881 & 2226\\

    \end{tabular}
    \caption{Discretization levels and the corresponding graph size.}
    \label{tab:discretization}
\end{table}

During the simulation process agents exchange information. See Figure~\ref{fig:torontoagents} for a sample simulation state after 100 steps. Some agents (marked with green colour) have already received the message, while others (marked with white colour) have not yet received it.

\begin{figure}
	\centering
	\includegraphics[width=0.5\textwidth]{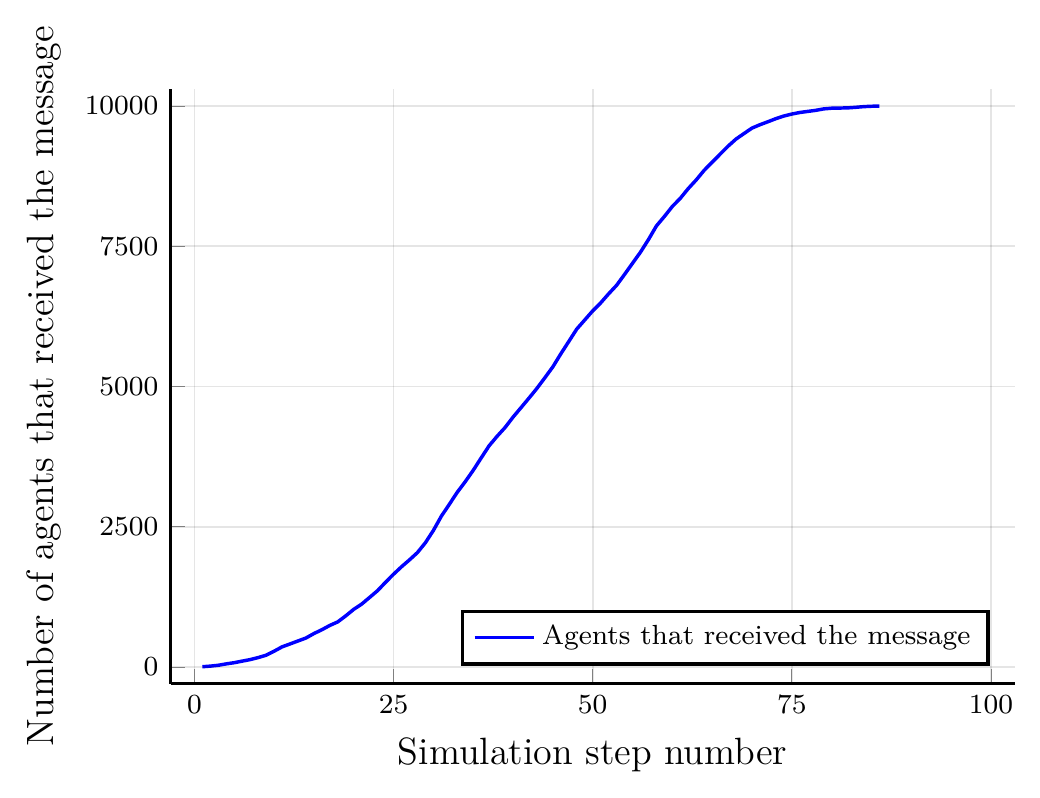}
	\caption{A sample simulation run for $k=10{,}000$ agents and a transportation network graph with a discretization level of $d=50$ meters.}
	\label{fig:samplerunsimple}
\end{figure}

We measured the number of agents who have received the message over time. For example, Figure~\ref{fig:samplerunsimple} presents a sample single simulation run for a population of $10{,}000$ agents. The simulation was started at the point of time $step=1$ where a single agent started to broadcast a message with other agents receiving it in subsequent steps. In the $Y$ axis we present the percentage of the population that has received the message at a given simulation step.

\begin{table}
\footnotesize
    \centering
    \begin{tabular}{l|r}
        Parameter & Considered values \\
        \hline
        discretization & [ $25$, $50$, $75$ ] \\
        jump-transmission & [Yes, No]   \\
        Number of agents $k$ & [ $10,40,70,100,150,200,$\\
        \quad &  $250,400,550,700,850,1000,$ \\
        \quad & $2000,3000,\ldots,10000,$\\
        \quad & $12000,14000,16000,18000,20000$ ]\\
        \hline
        Number of runs per each configuration & $3840$\\
        \hline
        \hline

    \end{tabular}
    \caption{Parameter sweep for the simulation experiments. A total of $ 3 \times 2 \times 26 = 156 $ parameter value sets have been considered. For each set of parameter values 3{,}840 simulations have been run. }
    \label{tab:parametersweep}
\end{table}

\subsection{Simulation results: message broadcasting dynamics }

In order to understand the determinants of message broadcasting dynamics, we have performed numerical simulations for various parameters of the model. Figures~\ref{fig:message_prop_med} and~\ref{fig:message_prop_high} show how the percentage of agents who received the message changes with the number of agents. Note that $90\%$ empirical confidence intervals are relatively wide. In practice this means that the observed variance of a single simulation run is quite large.

\begin{figure}
    \centering
    \includegraphics[width=0.65\textwidth]{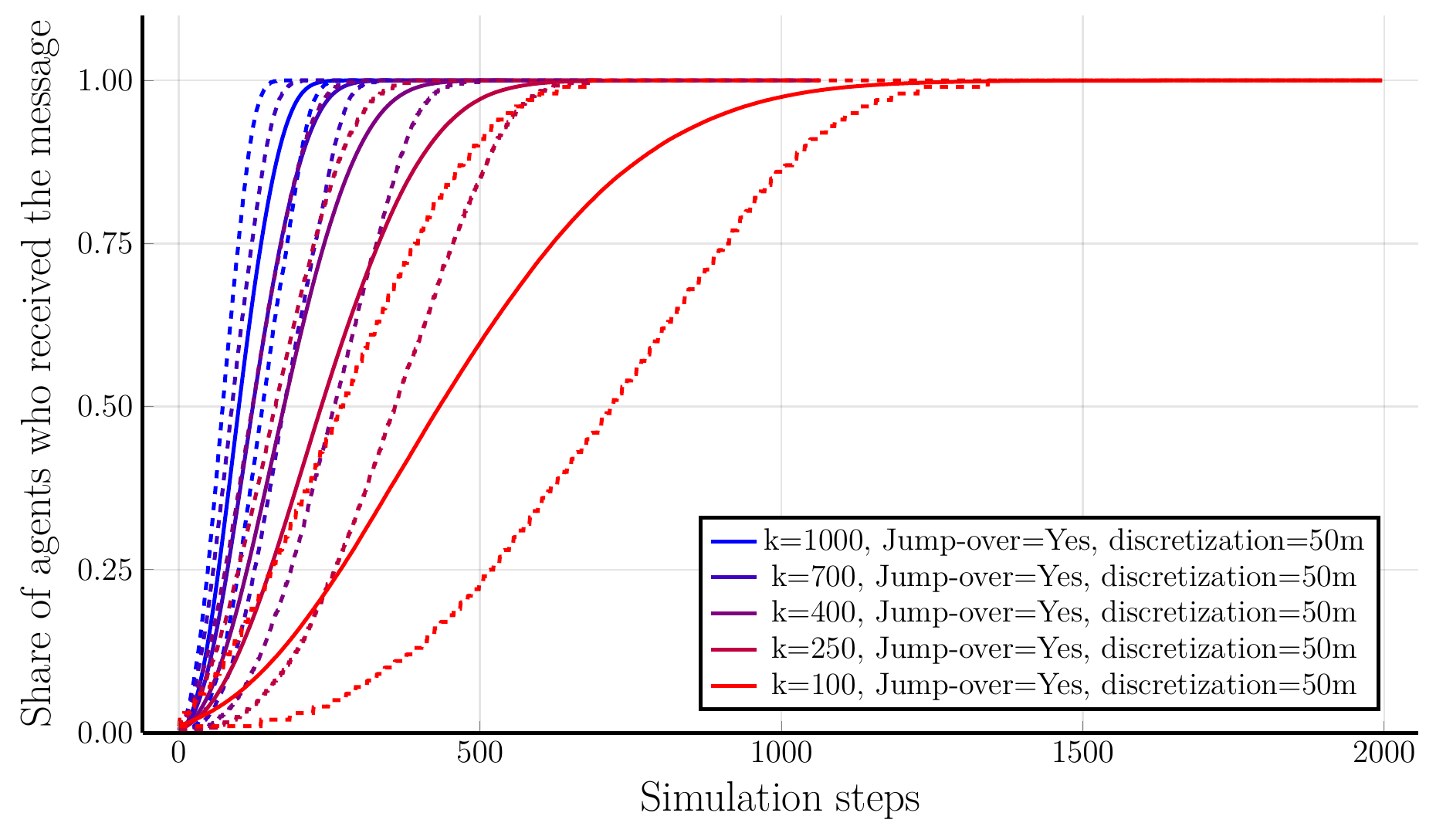}
    \caption{Message broadcasting dynamics for smaller agent populations. For each number of agents the results are averaged over 3,840 simulation runs. The dashed line represents the confidence interval for each observed values $90\%$ of simulation values for each configuration remain in that interval.}
    \label{fig:message_prop_med}
\end{figure}

\begin{figure}
    \centering
    \includegraphics[width=0.65\textwidth]{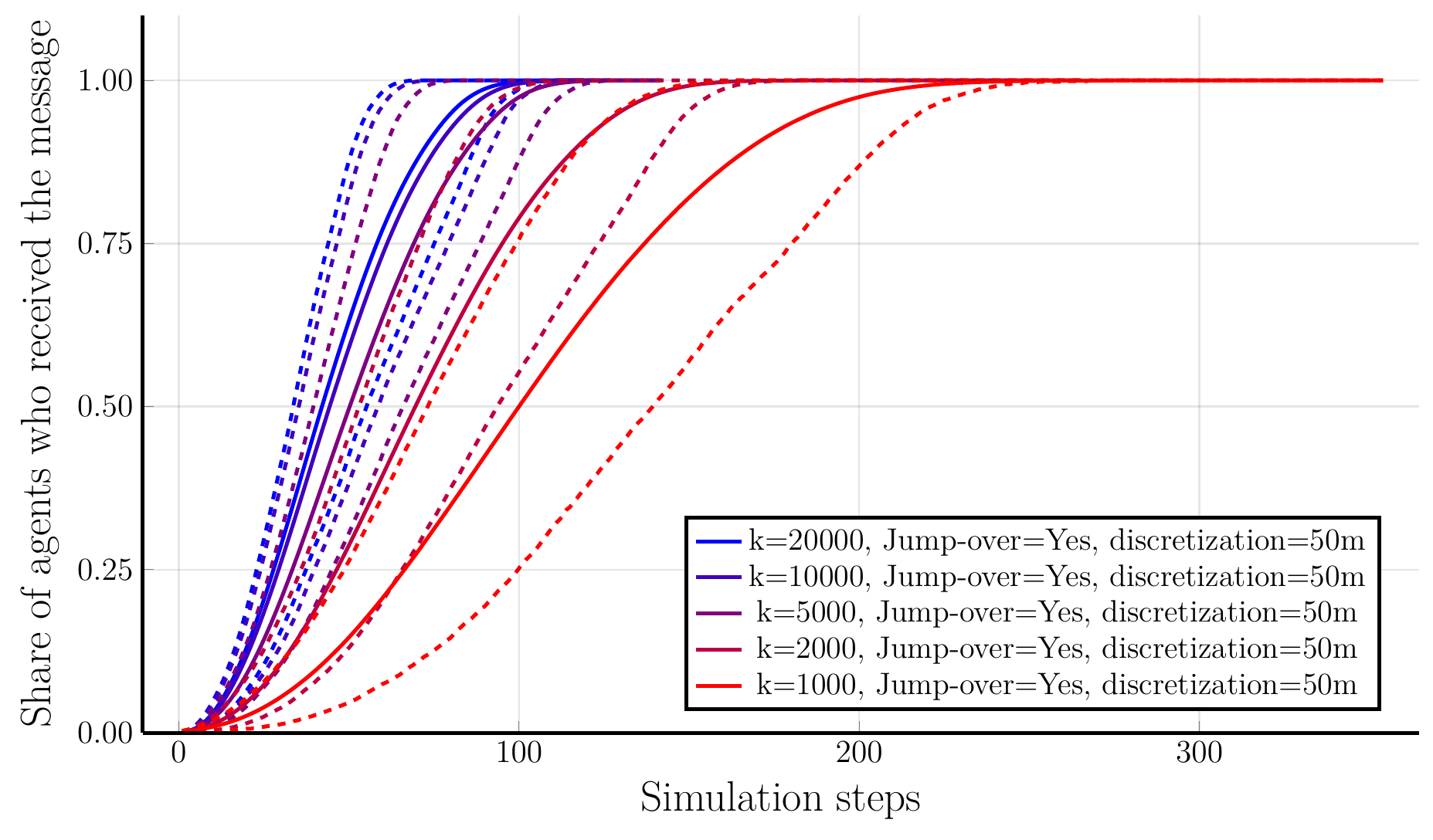}
    \caption{Message broadcasting dynamics for larger agent populations. For each number of agents the results are averaged over 3,840 simulation runs. The dashed line represents the confidence interval for each observed values, $90\%$ of simulation values for each configuration remain in that interval.}
    \label{fig:message_prop_high}
\end{figure}

\begin{figure}
    \centering
    \includegraphics[width=0.45\textwidth]{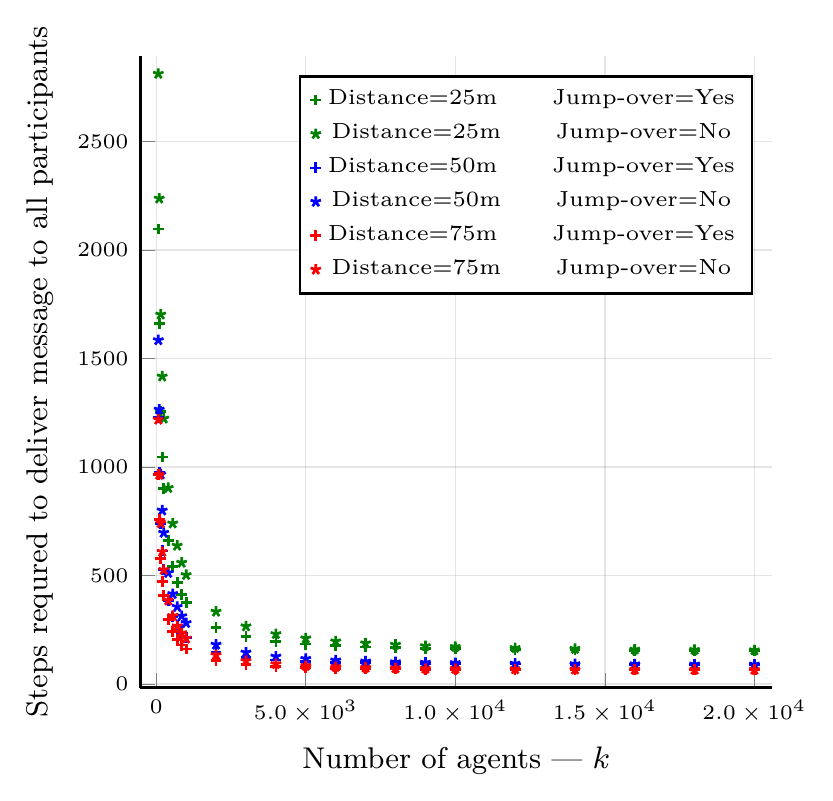}
    \caption{Number of steps required to broadcast a message to all system participant vs.\ the number of participants. For readability, only simulations for populations having at least $70$ agents are included.}
    \label{fig:agentsvssteps}
\end{figure}

Figure~\ref{fig:agentsvssteps} presents the number of steps required to broadcast a message to all system participant as a function of the number of participants. It can be seen that in larger agent populations and in more granular graphs the message is delivered much faster than for smaller population groups. Motivated by theoretical observations for complete graphs, we will scale the $X$ axis by $\ln k/k$ and observe the dependency.

\begin{figure}
    \begin{minipage}{0.70\linewidth}
        \centering
        \includegraphics[width=0.65\linewidth]{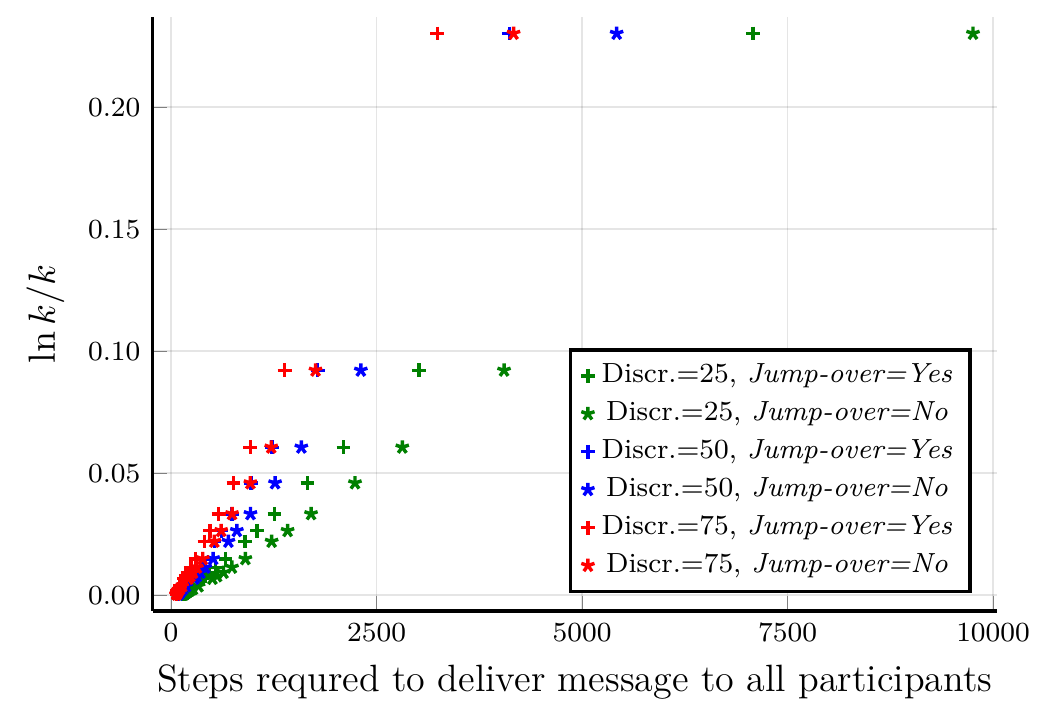}
    \end{minipage}%
    \begin{minipage}{0.2\linewidth}
        \centering
            \begin{tabular}{cc}
            Discret. & Corr.\\
            \hline
            \multicolumn{2}{c}{\textit{Jump-over=No}}\\
            $25$m & $0.99940$\\
            $50$m & $0.99917$\\
            $75$m & $0.99922$\\
            \hline
            \multicolumn{2}{c}{\textit{Jump-over=Yes}}\\
            $25$m & $0.99946$\\
            $50$m & $0.99928$\\
            $75$m & $0.99935$\\
            \hline
        \end{tabular}
    \end{minipage}
        \caption{Correlation of $\ln k/k$ to the actual number of steps required to broadcast the message for different graph discretization levels}%
    \label{fig:scatterplotlogk_over_k}
\end{figure}

\begin{figure}
    \begin{minipage}{0.70\linewidth}
        \centering
        \includegraphics[width=0.65\textwidth]{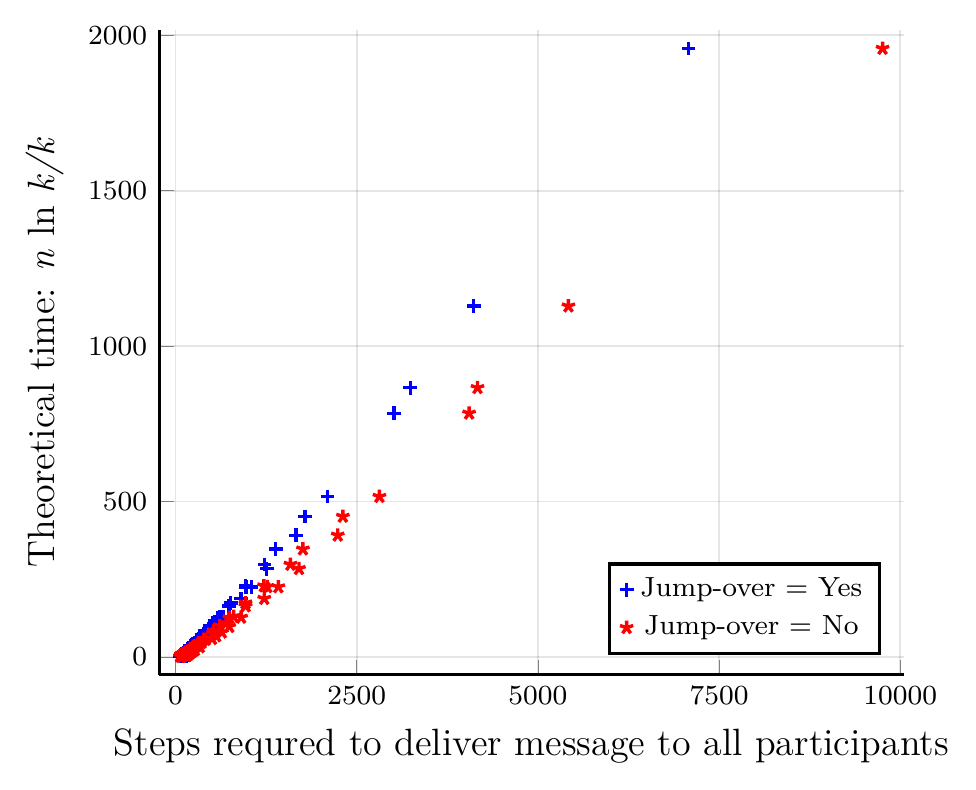}
    \end{minipage}%
    \begin{minipage}{0.3\linewidth}
        \centering
        \begin{tabular}{cc}
            \textit{Jump-over} & Corr.\\
            \hline
            Yes & $0.99879$\\
            No & $0.99879$\\
        \end{tabular}
    \end{minipage}
    \caption{There is an empirical linear dependency between $n \ln k/k$ and the number of steps required to complete the simulation. In the \emph{Jump-over = Yes} the observed time for fully broadcasting the message is shorter.}
    \label{fig:jump_yes_vs_no}
\end{figure}

It is clear from Figure~\ref{fig:scatterplotlogk_over_k} that there is a strong correlation between the expected number of steps required to deliver a message to all participants of the system (each point is an average for $3{,}840$ simulation runs) and the value of $\ln k/k$ where $k$ is the number of agents in the system.
We further aggregate the data across all discretization levels that correspond to different number of vertices in the graph (see Table~\ref{tab:discretization}). Since we expect the time required to travel to increase proportionally with the size of the graph, we multiply $\ln k/k$ by $n$ and hence measure the correlation of the empirical expected simulation times required to broadcast the message versus the theoretical values of $n\ln k/k$.
Again, the correlation values are around $0.999$ which suggest the existence of a strong dependency.

Finally, let us have a look at the aggregated data. We again consider two \emph{Jump-over} scenarios \emph{Yes} and \emph{No}---see Figure~\ref{fig:jump_yes_vs_no}. This time the simulation results for the entire parameter sweep (see Table~\ref{tab:parametersweep}) for all considered discretization levels (see Table~\ref{tab:discretization}) have been aggregated into one set. Each point in the figure is an average result of $3{,}840$ simulation runs.  Similarly to the previous results it is clear that including \emph{Jump-over} message passing reduces the required time by around $20$--$25$\%. The empirical results show that there is a strong correlation between $n \ln k/k$ and the number of steps required to fully broadcast a message. Note that in the presented scenarios the different values of $n$ taken into consideration depended on the discretization level while the structure of graph remained unchanged.

\end{document}